\newtheorem{lemma}{Лемма}
\newtheorem{theorem}{Теорема}
\newtheorem{definition}{Определение}
\newtheorem{proposition}{Предложение}
\newtheorem{remark}{Замечание}
\newcommand{\erf}{\mathrm{erf}}
\DeclareMathOperator{\sign}{sign}
\date{}
\begin{document}
\renewcommand{\refname}{Список литературы}

\title{
Скорость сходимости алгоритмов решения линейного уравнения методом квантового отжига
\footnote{Авторы благодарят Михаила Скопенкова за внимание к работе и полезные замечания. Исследование разделов 1, 2, 3.1 выполнено при поддержке гранта Российского научного фонда №~21-11-00047. Исследование разделов 3.3 и 3.4 выполнено в Санкт-Петербургском международном математическом институте имени Леонарда Эйлера при финансовой поддержке Министерства науки и высшего образования Российской Федерации (соглашение №~075–15–2022–287 от 06.04.2022). Исследование раздела 3.2 выполнено при поддержке Projeto Paz и Coordenação de Aperfeiçoamento de Pessoal de Nível Superior - Brasil (CAPES) - Finance Code 001.}}
\author{В.С. Шалгин$^1$, С.Б. Тихомиров$^2$}

\maketitle

\vspace{-0.5cm}

{\footnotesize
\begin{tabbing}
    $^1$ \= Санкт-Петербургский государственный университет, \+\\Россия, 199034, Санкт-Петербург, Университетская наб. 7/9.
\end{tabbing}
\begin{tabbing}
    $^2$ \=Pontifícia Universidade Católica do Rio de Janeiro - PUC-Rio,\+\\Rua Marquês de São Vicente, 225, Gávea - Rio de Janeiro, \\RJ - Brasil
Cep: 22451-900 - Cx. Postal: 38097
\end{tabbing}}

{\small
\begin{quote}
\noindent{\sc Аннотация.} Мы рассматриваем различные итеративные алгоритмы решения линейного уравнения $ax=b$ при помощи квантового вычислительного устройства, работающего по принципу квантового отжига. В предположении, что результат работы компьютера описывается распределением Больцмана, показано, при каких условиях алгоритмы решения уравнения сходятся, и дана оценка на скорость их сходимости. Рассмотрено применение данного подхода для алгоритмов, использующих как бесконечное количество кубитов, так и малое количество кубитов.


\noindent{\bf Ключевые слова:} адиабатические квантовые вычисления, квантовый отжиг, линейное уравнение, распределение Больцмана, усеченное нормальное распределение.
\end{quote}
}

\section{Введение}

\hyphenation{quan-tum}
\hyphenation{an-neal-er}

Квантовые вычисления представляют собой новую парадигму выполнения вычислений, предложенную Ю. И. Маниным \cite{Manin} и Р. Фейнманом \cite{Feynman}. Функционирование таких вычислительных устройств основано на квантовой механике. В основе вычислений лежат квантовые биты (кубиты), которые могут находиться не только в состоянии ``0'' или ``1'', но и в их суперпозиции. Что более важно, квантовые биты могут находиться в запутанном состоянии. Таким образом, система из $n$ кубитов описывается $2^n$ комплексными числами, более того операция над одним кубитом ``меняет состояние'' всех запутанных кубитов, что является основой квантового параллелизма --- одновременного проведения вплоть до $O(2^n)$ операций над числами, описывающими состояние системы \cite{Williams, Nielsen}. В квантовых вычислениях появляются и ограничения, не свойственные классическим вычислениям, например, невозможность копирования состояния и считывания состояния без его изменения. Для множества задач разработаны алгоритмы для квантовых компьютеров, работающие быстрее, чем их классические аналоги, вплоть до экспоненциального ускорения. Например, алгоритм поиска (алгоритм Гровера) \cite{Grover}, разложение на множители (алгоритм Шора) \cite{ShorQuantumSupremacy}, приближенное решение систем линейных уравнений \cite{HHL}.

Есть две основные модели квантовых вычислений: универсальная схемная модель (circuit based) \cite{Williams, Nielsen} и адиабатическая модель \cite{AdiabaticQuantumComputation}. В схемной модели операции выполняются одна за другой, как в классических вычислениях. Операции --- квантовые вентили --- представляют собой унитарные операторы, действующие на состояние системы кубитов. На основе этой модели функционируют квантовые компьютеры таких компаний как IBM, Google, Intel. Основными ограничениями для практического использования таких квантовых компьютеров в настоящее время является небольшой размер (порядка 100 кубитов) и низкая точность выполнения операций. 

Принцип работы адиабатических квантовых компьютеров основан не на последовательном выполнении операций, а на адиабатической теореме \cite{AdiabaticQuantumComputation}. Если изначально система находилась в состоянии минимальной энергии для гамильтониана $H_1$, то при достаточно медленной эволюции гамильтониана 
$$
    H(t) = (1-vt)H_1+ vtH_2, \quad t \in [0, 1/v]
$$ 
в конечный момент времени система будет находится в состоянии минимальной энергии для гамильтониана $H_2$. Гамильтониан $H_2$ строится таким образом, что состояние минимальной энергии для него будет решением некоторой задачи. Такой подход позволяет эффективно решать задачи дискретной оптимизации, например, задача коммивояжера \cite{TravellingSalesman} и задача разрешимости булевых функций \cite{QuantumComputationbyAdiabaticEvolution}. Известно, что адиабатическая модель эквивалентна универсальной схемной модели квантовых вычислений \cite{AdiabaticIsEquiv}.

С адиабатическими квантовыми вычислениями тесно связан квантовый отжиг (quantum annealing) --- квантовый аналог алгоритма имитации отжига \cite{quantum_annealing_paper}. Устройство, работающее по принципу квантового отжига, носит название ``quantum annealer'' (QA). Процесс поиска также начинается с состояния минимальной энергии системы для гамильтониана $H_1$. Однако структура целевого гамильтониана $H_2$ более ограничена по сравнению с той, что фигурирует в общей адиабатической модели. А именно, квантовый отжиг нацелен на поиск точки минимума целевой функции модели Изинга \cite{quantum_annealing_paper, ising}. В упрощенном виде ее можно представить так:
\begin{equation}\label{Ising_func}
    F(\sigma)=\sum_i h_i\sigma_i+\sum_{i<j} J_{ij}\sigma_i\sigma_j,
\end{equation}
где $\sigma_i\in\{-1,1\}$ представляют собой спины кубитов, а $h_i$ и $J_{ij}$ --- коэффициенты линейных и квадратичных слагаемых соответственно. Результат работы квантового отжига --- набор спинов кубитов $\{\sigma_i\}$, которые доставляют минимум функции $F(\sigma)$.

Точный результат может быть получен только в случае нулевой абсолютной температуры у QA, что в текущих реализациях является недостижимым. На практике такое устройство будет выдавать сэмпл из распределения Больцмана \cite{AlbashTemperatureScaling}. Вероятность получить состояние $\sigma$ зависит от значения функции $F$ и обратной температуры~$\beta$:
\[
    P(\sigma)\propto e^{-\beta F(\sigma)}.
\]
Ввиду этого квантовый отжиг является неточным, эвристическим алгоритмом. Кроме того, наличие шума также может внести помехи в работу компьютера.

Высокий интерес к модели квантового отжига обусловлен наличием реализации устройства, работающего по данной модели с большим количеством кубитов. Соответствующая реализация представлена устройствами компании D-Wave Systems \cite{dwavedocs}, количество кубитов в которых достигает 5000. Дополнительными ограничениями в работе компьютера являются граф связности между кубитами и неточность выполнения операций (квантовый шум). Перед тем, как решить задачу с помощью компьютера D-Wave ее необходимо перевести в термины модели Изинга. Согласованность распределения Больцмана и результата работы D-Wave достаточно хорошо продемонстрирована \cite{ising, Boltz1, Boltz2, Boltz3}. 
Для обсуждения вопроса о наличии превосходства QA над классическими компьютерами см., например, \cite{QuestionQuantum1, Nature5000}.

В нашей работе мы будем опираться на модель квантовых вычислений, работающую по принципу квантового отжига. Одной из важных для приложений задач является решение систем линейных алгебраических уравнений. Задача решения системы $Ax=b$ эквивалентна задаче минимизации функции $||Ax-b||^2$, часто называемой в литературе ``linear least squares problem'' (LLS). Данная задача может быть решена при помощи QA путем ее перевода в целевую функцию модели Изинга. Заметим, что линейность по переменной $x$ является необходимым условием, так как в противном случае целевая функция не будет иметь форму \eqref{Ising_func}. 

Во многих работах исследуется решение задачи LLS с помощью QA. В статье \cite{OMalleyVesselinovLLS} предлагается подход к переформулировке задачи LLS в форму модели Изинга. Также в работе авторы предположили, что QA лучше всего подходит для решения задачи LLS в случае, если матрица $A$ разреженная или когда компоненты вектора $x$ бинарные. Этот подход получил дальнейшее развитие. В статье \cite{BorleLomonaco} предлагается подход к решению произвольной системы линейных уравнений, а также приводятся условия, при которых возможно получить ускорение по сравнению с лучшим из известных классических алгоритмов решения произвольных систем линейных уравнений. В статье \cite{RogersFloatingpoint} рассматривается задача решения одного уравнения с одной неизвестной и задача LSS, подробно излагается процесс переформулировки исходной задачи в форму модели Изинга и встраивания полного графа задачи в граф компьютера D-Wave. В статье \cite{BorleHowViable} исследуется вопрос о целесообразности использования QA для решения систем уравнений и предлагается гибридный алгоритм решения линейных систем. Подход решения линейных систем с помощью QA нашел применение во многих задачах, в которых возникает необходимость решения системы уравнений: оценке линейной регрессии \cite{linearRegression}, для задач сейсмической томографии \cite{seismic}, в задаче определения преобразования из точечного множества \cite{MML2022}, решения краевой задачи для эллиптических уравнений \cite{Elliptic2023}. 

Во всех упомянутых работах были предложены алгоритмы решения линейных уравнений и систем, включая и итеративные алгоритмы \cite{RogersFloatingpoint, seismic, MML2022,Elliptic2023}. Однако рассматривалась только экспериментальная постановка задачи, теоретические вопросы сходимости подобных итеративных алгоритмов не исследовались.

В данной работе рассматривается подход к решению задачи LLS аналогичный \cite{RogersFloatingpoint, seismic, MML2022, Elliptic2023} для случая одного уравнения с одной неизвестной:
\[
    ax=b.
\]
В рамках данного подхода мы учитываем подверженность квантового компьютера ошибкам и вероятностную природу QA. Нас будет интересовать вопрос устойчивости итеративного алгоритма к погрешностям, связанным с вероятностной обусловленностью результата работы компьютера. Ввиду этого простейшее уравнение является подходящей моделью для анализа сходимости и устойчивости итеративных алгоритмов. Так как результат работы QA подчиняется распределению Больцмана, это позволяет производить оценку работы алгоритмов на основе нормальных распределений и их модификаций.

Мы рассматриваем различные итеративные алгоритмы, которые работают как для большого (стремящегося к бесконечности) количества кубитов так и для малого числа кубитов. Характерной чертой рассмотренных алгоритмов является адаптация размера поправки на каждом шаге в зависимости от текущего значения невязки, что аналогично подходам \cite{RogersFloatingpoint, seismic, MML2022,Elliptic2023} и расширяет предложенные ранее подходы \cite{OMalleyVesselinovLLS, BorleLomonaco, linearRegression}. Мы доказываем, что предложенные алгоритмы сходятся к точному значению, при достаточно малых ошибках в квантовом компьютере и оцениваем скорость сходимости (см. теоремы \ref{Th_xn_normal_convergence_general}, \ref{Th_general_model}). Для реализации алгоритмов адаптации мы используем  сложение, умножение и домножение на целые степени двойки, что соответствует сдвигу битов, и не используем деление на произвольное число.

В разделе \ref{section_preliminary} мы даем предварительные определения, переводим задачу решения уравнения в термины модели, эквивалентной модели Изинга, и устанавливаем вероятностную модель вычислений. В разделе \ref{section_improving} мы рассматриваем итеративные алгоритмы, основанные на последовательном улучшении приближенных решений уравнения. В разделе \ref{subsection_improving_solution_normal_distr} мы рассматриваем идеализированный случай, при котором результат работы QA подчиняется закону нормального распределения. В разделе~\ref{subsection_improving_general} мы рассматриваем общий подход к изучению скорости сходимости итеративных алгоритмов. Случай усеченного нормального распределения решений, соответствующий бесконечному количеству кубитов, рассмотрен в разделе \ref{subsection_improving_solution_cut_normal_distr}, случай распределения Больцмана, соответствующий конечному количеству кубитов, --- в разделе~\ref{subsection_improving_solution_Boltzmann_distr}.


\section{Предварительные построения}
\label{section_preliminary}

\hyphenation{un-cons-trained}

Модель Изинга эквивалентна, так называемой, модели QUBO (quadratic unconstrained binary optimization) \cite{qubo}:
\begin{equation}\label{qubo}
    H(q_1,q_2,\ldots,q_n) = \sum_{i=1}^{n}\sum_{j=i}^{n} Q_{ij} q_i q_j,
\end{equation}
где $(Q_{ij})_{i,j=1}^n$ --- верхнетреугольная квадратная матрица порядка $n$, $q_i\in\{0,1\}$. Для переформулировки задачи решения уравнения $ax=b$ в задачу QUBO мы используем функцию 
\begin{equation}\label{hamiltonian_expression}
    H(x) = (ax-b)^2.
\end{equation}
Переменную $x$ представляем с конечной точностью в виде
\begin{equation}\label{expression_for_x}
    x=\vartheta q_{p} +\sum\limits_{i=r}^{p-1} 2^i q_i,
\end{equation}
где $p,r\in\mathbb{Z}$, $r<p$, $\vartheta=-2^p+2^r$,  $q_i\in\{0,1\}$. Бит $q_p$ отвечает за знак переменной $x$. Роль константы $\vartheta$ заключается в том, что набор битов $q_r\ldots q_{p-1}$ для отрицательных значений $x$ представляет собой дополнительный код к аналогичному набору битов для положительных значений $x$. Отметим, что можно использовать и другие представления переменной \cite{BorleLomonaco, RogersFloatingpoint, Elliptic2023}
Обозначим через $\Omega_{r,p}$ множество чисел вида \eqref{expression_for_x}. Тогда
\begin{equation*}\label{Omega_r_p}
    \Omega_{r,p}=\left\{\pm\sum_{i=r}^{p-1} q_i 2^i~:~q_i\in\{0,1\}\right\}.
\end{equation*}
После подстановки \eqref{expression_for_x} в \eqref{hamiltonian_expression} и отбрасывания постоянного слагаемого мы получаем целевую функцию вида \eqref{qubo}:
\begin{equation*}
    H(q_{r},\ldots,q_{p})
    =\sum_{i=r}^{p}\sum_{j=i}^{p}{Q_{ij}q_i q_j},
\end{equation*}
где
\begin{equation*}
    Q_{ii}=
    \begin{cases}
        a^2\vartheta^2-2ab\vartheta,~~~~i=p,\\
        2^{2i}a^2-2^{i+1}ab,~r\leqslant i\leqslant p-1,
    \end{cases}
    Q_{ij}=
    \begin{cases}
        2^{i+1}a^2\vartheta,~~j=p,r\leqslant i\leqslant p-1,\\
        2^{i+j+1}a^2,~r\leqslant i<j\leqslant p-1.
    \end{cases}
\end{equation*}

Как было сказано ранее, ошибки в работе QA имеют вероятностный характер, и мы считаем, что они подчиняются распределению Больцмана. Мы будем использовать вариацию этого распределения, определенную ниже.

\begin{definition}\label{def_Boltzmann}
    Пусть $\Omega$ --- конечное подмножество вещественных чисел, $H: \mathbb{R}\rightarrow[0,+\infty)$, $\beta>0$. Распределением Больцмана $\mathrm{B}\left(\beta,\Omega,H(x)\right)$ на множестве $\Omega$ с параметром $\beta$ и целевой функцией $H(x)$ назовем вероятностное распределение на $\Omega$, в котором вероятность элемента $x\in\Omega$ определяется как
    \[
        P(x)= \frac{1}{Z}e^{-\beta^2 H(x)}, \quad \mbox{где $Z=\sum_{y\in\Omega}e^{-\beta^2 H(y)}$}.
    \]
\end{definition}

Чем меньше значение $H(x)$, тем больше вероятность получить $x$ в качестве результата работы компьютера. Параметр $\beta$ отражает точность работы компьютера. Чем больше значение $\beta$, тем более вероятно результат работы компьютера будет близок к точке минимума функции $H(x)$ на множестве $\Omega$. Если $\beta\rightarrow+\infty$, то компьютер будет работать без ошибок.

Вернемся к целевой функции \eqref{hamiltonian_expression}. Если предположить, что количество кубитов в QA стремится к бесконечности, и для двоичного представления \eqref{expression_for_x} переменной $x$ мы используем как все положительные степени двойки, так и все отрицательные, то распределение решений будет стремиться к нормальному, что показывает следующее очевидное предложение.

\begin{proposition}\label{St_limit_measure}
    Пусть $r,p\in\mathbb{Z}$, $r<p$, $\Omega_{r,p}=\left\{ \pm\sum\limits_{i=r}^{p-1} q_i\,2^i\,:\,q_i\in\{0,1\}\right\}$, $\beta>0$, $a,b\in\mathbb{R}$, $a\neq0$. Тогда $$\lim\limits_{\substack{p\rightarrow+\infty\\ r\rightarrow-\infty}}
    \mathrm{B}\left(\beta,\Omega_{r,p},(ax-b)^2\right)=\mathcal{N}\left(\frac{b}{a},\frac{1}{2a^2\beta^2}\right)$$ по распределению.
\end{proposition}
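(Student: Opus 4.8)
The plan is to show that the cumulative distribution functions of the discrete measures converge pointwise to the CDF of $\mathcal{N}(b/a,\,1/(2a^2\beta^2))$; since the limiting CDF is continuous on all of $\mathbb{R}$, pointwise convergence at every point is exactly convergence in distribution. The starting observation is that $\Omega_{r,p}$ is a uniform grid: writing $h=2^r$, one checks that $\Omega_{r,p}=\{mh : m\in\mathbb{Z},\ |m|\le 2^{p-r}-1\}$, so the points are equally spaced with step $h=2^r\to 0$ as $r\to-\infty$, filling the symmetric interval $[-(2^p-2^r),\,2^p-2^r]$ whose half-length tends to $+\infty$ as $p\to+\infty$. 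Next I would complete the square, $(ax-b)^2=a^2\bigl(x-\tfrac{b}{a}\bigr)^2$, so that the Boltzmann weight becomes $e^{-\beta^2 a^2 (x-b/a)^2}=:w(x)$, which is, up to normalization, exactly the density of $\mathcal{N}(b/a,\sigma^2)$ with $\sigma^2=1/(2a^2\beta^2)$.

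The heart of the argument is that the normalizing sum and the partial sums are Riemann sums for the corresponding integrals of $w$. Multiplying numerator and denominator by the mesh $h$, for a fixed $t\in\mathbb{R}$ the CDF of $\mathrm{B}(\beta,\Omega_{r,p},(ax-b)^2)$ is
\[
  F_{r,p}(t)=\frac{\sum_{mh\le t} w(mh)}{\sum_{m} w(mh)}=\frac{h\sum_{mh\le t} w(mh)}{h\sum_{m} w(mh)}.
\]
I would argue that $h\sum_{m} w(mh)\to\int_{-\infty}^{\infty} w(x)\,dx=\sqrt{\pi}/(\beta|a|)$ and $h\sum_{mh\le t} w(mh)\to\int_{-\infty}^{t} w(x)\,dx$, so that the ratio converges to $\bigl(\int_{-\infty}^{t}w\bigr)/\bigl(\int_{-\infty}^{\infty}w\bigr)$, which is precisely the value at $t$ of the normal CDF with mean $b/a$ and variance $1/(2a^2\beta^2)$. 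This identifies the limit and finishes the proof modulo the Riemann-sum claim.

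The main obstacle is precisely justifying these Riemann-sum limits, because the mesh shrinks and the range expands simultaneously, so this is not approximation of a single fixed integral but over growing domains. I would handle it using the unimodality of $w$: splitting $\mathbb{R}$ at the mode $x=b/a$, the function is monotone on each half-line, so on any grid the left and right Riemann sums bracket the integral and differ by at most $h\,w(b/a)\to 0$, controlling the discretization error on bounded pieces; the truncation error outside $[-R,R]$ is bounded, again by monotonicity, by the Gaussian tail $\int_{|x|>R} w\,dx$, which vanishes as $R=2^p-2^r\to\infty$. Combining the two estimates gives convergence of the full sum in the denominator, and the same bracketing applied to the truncated sum up to $t$ gives the numerator limit.

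An equivalent and perhaps shorter route would be to pass to characteristic functions: the characteristic function of $\mathrm{B}(\beta,\Omega_{r,p},(ax-b)^2)$ is again a ratio of Riemann sums, $\bigl(h\sum_m e^{ismh}w(mh)\bigr)/\bigl(h\sum_m w(mh)\bigr)$, which converges to $\bigl(\int e^{isx}w(x)\,dx\bigr)/\bigl(\int w(x)\,dx\bigr)$, the characteristic function of $\mathcal{N}(b/a,1/(2a^2\beta^2))$; Lévy's continuity theorem then yields convergence in distribution. The technical core --- controlling the Riemann sums over the expanding, refining grid --- is the same in both approaches.
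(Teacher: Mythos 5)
Your argument is correct. The paper itself gives no proof of this proposition --- it is introduced as an obvious statement and left to the reader --- so there is nothing to compare against; your Riemann-sum argument supplies exactly the justification one would expect. The identification of $\Omega_{r,p}$ with the uniform grid $\{m\,2^r : m\in\mathbb{Z},\ |m|\leqslant 2^{p-r}-1\}$, the completion of the square giving the weight $w(x)=e^{-\beta^2a^2(x-b/a)^2}$, and the bracketing of the Riemann sums via monotonicity of $w$ on either side of the mode (discretization error of order $2^r$, truncation error bounded by the Gaussian tail beyond $2^p-2^r$) together give pointwise convergence of the distribution functions to the continuous limiting distribution function, which is precisely convergence in distribution. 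One small caveat about your alternative route: for the characteristic function the integrand $e^{isx}w(x)$ is no longer monotone, so the bracketing argument does not transfer verbatim; there you would instead bound the Riemann-sum error by the mesh times the total variation of $e^{isx}w(x)$, which is finite since $\int(|s|\,w+|w'|)\,dx<\infty$. The route through distribution functions, as you wrote it, is the cleaner of the two.
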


Таким образом, в качестве приближения к распределению Больцмана, мы можем использовать нормальное распределение.


\section{Улучшение решения уравнения}
\label{section_improving}

\subsection{Модель улучшения решения, основанная на нормальном распределении}\label{subsection_improving_solution_normal_distr}

В этом разделе мы будем предполагать, что результат работы QA по решению уравнения $ax=b$ имеет распределение $\mathcal{N}\left(\frac ba, \frac{1}{2a^2\beta^2}\right)$ согласно предложению \ref{St_limit_measure}. Ниже мы рассмотрим, как будут распределены ошибки приближения к решению, если мы будем итерировать алгоритм. Пусть $x_n$ --- $n$-ое фиксированное приближение к решению уравнения $ax=b$. Точное решение $x$ мы можем представить как сумму $x_n$ и поправки: $x=x_n+\Delta_n$. Подставляя ее в исходное уравнение, мы получаем уравнение относительно поправки $\Delta_n$:
\begin{equation}\label{equation_for_delta_n}
    a\Delta_n=b-ax_n.
\end{equation}
Пусть целое число $l_n$ такое, что
\begin{equation}\label{l_n}
    \frac{1}{2^{l_n+1}}<|b-ax_n|\leqslant\frac{1}{2^{l_n}}.
\end{equation}
Вместо уравнения \eqref{equation_for_delta_n} будем решать на QA уравнение
\begin{equation}\label{eq_delta_n_tilde}
    a\widetilde{\Delta}_n=2^{l_n}(b-ax_n)
\end{equation}
с неизвестной $\widetilde{\Delta}_n$. По предположению имеем, что 
\begin{equation}\label{Delta_n_tilde_distr_norm}
    \widetilde{\Delta}_n\sim\mathcal{N}\left(\frac{2^{l_n}(b-ax_n)}{a},\frac{1}{2a^2\beta^2 }\right).
\end{equation}
Так как $\Delta_n=\frac{1}{2^{l_n}}\widetilde{\Delta}_n$, то по свойствам нормального распределения получаем, что $\Delta_n\sim\mathcal{N}\left(\frac ba -x_n,\frac{1}{2^{2l_n}}\cdot\frac{1}{2a^2\beta^2 }\right)$.
Пусть $\xi_n\sim\mathcal{N}(0,1)$, тогда
\begin{equation*}
    \Delta_n\stackrel{d}{=}
    \frac ba -x_n+\frac{\xi_n}{2^{l_n}\sqrt{2}a\beta}.
\end{equation*}
Следующее приближение к решению будем вычислять по формуле
\[
    x_{n+1}=x_n+\Delta_n.
\]

Заметим, что следующая поправка $\Delta_{n+1}$ и число $l_{n+1}$ зависят от предыдущей поправки $\Delta_n$. Следующее предложение позволяет построить последовательность приближений $x_n$, учитывая эти зависимости.

\begin{proposition}\label{St_sequence_x_n_normal}
    Пусть $a,b\in\mathbb{R}\backslash\{0\}$, $\sigma>0$. Пусть $\xi_n\sim\mathcal{N}(0,1)$, $n\geqslant0$ --- независимые в совокупности случайные величины. Построим последовательности случайных величин $x_n$, $x_n'$ $l_n$, $l_n'$, $\Delta_n$, $\Delta_n'$ при $n\geqslant0$. Положим $x_0=x_0'=l_0=l_0'=0$. Дадим дальнейшие определения при $n\geqslant0$. Пусть условное распределение случайной величины $\Delta_n$ при условии $x_n$ равно $\mathcal{N}\left(\frac ba -x_n,\frac{\sigma^2}{2^{2l_n}}\right)$. Положим
    \begin{equation*}
        \Delta_n'=\frac ba -x_n'+\frac{\sigma\xi_n}{2^{l_n'}},
    \end{equation*}
    \begin{equation*}
        x_{n+1}=x_n+\Delta_n,
    \end{equation*}
    \begin{equation*}\label{x_n_reccurent}
        x_{n+1}'=x_n'+\Delta_n'.
    \end{equation*}
    Определим $l_{n+1}$ и $l_{n+1}'$ следующим образом: $l_{n+1},l_{n+1}'\in\mathbb{Z}$ и
    \[
        2^{l_{n+1}}|b-ax_{n+1}|,2^{l_{n+1}'}|b-ax_{n+1}'|\in\left(1/2,1\right].
    \] 
    
    Тогда $\left(\Delta_0,\Delta_1,\ldots,\Delta_n\right)\stackrel{d}{=}\left(\Delta_0',\Delta_1',\ldots,\Delta_n'\right)$ для любого $n\geqslant0$.
\end{proposition}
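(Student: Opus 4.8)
The plan is to recognize that both $(x_n)_{n\geq0}$ and $(x_n')_{n\geq0}$ are time-homogeneous Markov chains started at $0$ with one and the same transition kernel, and then to transfer the resulting equality of the $x$-laws to the $\Delta$-laws. First I would note that $l_n$ is a deterministic function of $x_n$: for $v=|b-ax_n|>0$ the requirement $2^{l_n}v\in(1/2,1]$ singles out a unique integer, so $l_n=h(x_n)$ for a fixed function $h$ (defined off the null set $\{b-ax_n=0\}$), and the very same $h$ produces $l_n'$ from $x_n'$.

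Next I would check that the two chains share a kernel. For the primed sequence, substituting the definition of $\Delta_n'$ gives the cancellation $x_{n+1}'=x_n'+\Delta_n'=\tfrac ba+\sigma\xi_n/2^{h(x_n')}$; since $\xi_n\sim\mathcal N(0,1)$ is independent of $(\xi_0,\dots,\xi_{n-1})$ and hence of $(x_0',\dots,x_n')$, the conditional law of $x_{n+1}'$ given the whole past is $\mathcal N\big(\tfrac ba,\sigma^2/2^{2h(x_n')}\big)$, depending on the past only through $x_n'$. For the unprimed sequence the prescribed conditional law of $\Delta_n$ given $x_n$ yields, via $x_{n+1}=x_n+\Delta_n$, exactly $x_{n+1}\mid x_n\sim\mathcal N\big(\tfrac ba,\sigma^2/2^{2h(x_n)}\big)$. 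Thus both chains carry the transition kernel $K(x,\cdot)=\mathcal N\big(\tfrac ba,\sigma^2/2^{2h(x)}\big)$ and the common initial value $x_0=x_0'=0$.

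Then the conclusion follows by the standard induction on $n$: equality of initial law and of transition kernels forces equality of all finite-dimensional laws. Concretely, for bounded measurable $f$ I would write $\mathbb E f(x_0,\dots,x_{n+1})=\mathbb E\big[(Kf_n)(x_0,\dots,x_n)\big]$, where $(Kf_n)(x_0,\dots,x_n)=\int f(x_0,\dots,x_n,y)\,K(x_n,dy)$ is the same functional for both chains, and then invoke the inductive hypothesis $(x_0,\dots,x_n)\stackrel{d}{=}(x_0',\dots,x_n')$. Finally, since $\Delta_k=x_{k+1}-x_k$, the vector $(\Delta_0,\dots,\Delta_n)$ is a fixed continuous function of $(x_0,\dots,x_{n+1})$, so equality of the latter laws gives $(\Delta_0,\dots,\Delta_n)\stackrel{d}{=}(\Delta_0',\dots,\Delta_n')$.

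The main obstacle is interpretive and measure-theoretic rather than computational. One must read the phrase ``conditional law of $\Delta_n$ given $x_n$'' as specifying the one-step transition of a Markov chain — that is, conditionally on the full history $\Delta_n$ depends on the past only through $x_n$ — since otherwise the recursion would not determine the joint law at all; this is precisely the structure the explicit primed construction enjoys, which is what makes the comparison meaningful. One must also verify the a.s. well-definedness and measurability of $h$, discarding the probability-zero event $\{b-ax_n=0\}$ on which $l_n$ is left unconstrained. Once these points are settled, the remainder is routine.
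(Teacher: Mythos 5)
Your proof is correct and follows essentially the same route as the paper: an induction on $n$ comparing the one-step conditional laws of the two sequences given the past, which coincide because both depend on the history only through the current value $x_n$ (equivalently, through $S=r_0+\cdots+r_{n-1}$ in the paper's notation). Your packaging via a common Markov transition kernel $K(x,\cdot)=\mathcal N\bigl(\tfrac ba,\sigma^2/2^{2h(x)}\bigr)$ and the final transfer from the $x$-laws to the $\Delta$-laws is just a cleaner restatement of the paper's disintegration argument, and your explicit remarks on the Markov reading of the hypothesis and on the null set $\{b-ax_n=0\}$ make precise what the paper leaves implicit.
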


\begin{proof}
Доказательство будем вести индукцией по $n$. При $n=0$ мы имеем $\Delta_0\sim\mathcal{N}\left(\frac ba,\sigma^2\right)$ и  $\Delta_0'=\frac ba+\sigma\xi_0\sim\mathcal{N}\left(\frac ba,\sigma^2\right)$. Значит, $\Delta_0\stackrel{d}{=}\Delta_0'$.

Пусть $\left(\Delta_0,\ldots,\Delta_{n-1}\right)\stackrel{d}{=}\left(\Delta_0',\ldots,\Delta_{n-1}'\right)$, то есть для любого борелевского множества $A\subset\mathbb{R}^{n}$ выполнено $P\left(\left(\Delta_0,\ldots,\Delta_{n-1}\right)\in A\right)=P\left(\left(\Delta_0',\ldots,\Delta_{n-1}'\right)\in A\right)$. Покажем, что тогда для любого борелевского множества $A\subset\mathbb{R}^{n+1}$ будет выполнено  $P\left(\left(\Delta_0,\ldots,\Delta_{n}\right)\in A\right)=P\left(\left(\Delta_0',\ldots,\Delta_{n}'\right)\in A\right)$. Достаточно доказать, что
\[
    P\left(\left(\Delta_0,\ldots,\Delta_{n}\right)\in A_0\times\ldots\times A_{n}\right)=P\left(\left(\Delta_0',\ldots,\Delta_{n}'\right)\in A_0\times\ldots\times A_{n}\right)
\]
для любых борелевских $A_i\subset\mathbb{R}$, $i=0,1,\ldots,n$. По формуле полной вероятности получаем, что вероятность $P\left(\left(\Delta_0',\ldots,\Delta_{n}'\right)\in A_0\times\ldots\times A_{n}\right)$ равна
\[
    \int_{\mathbb{R}^n}
    P\left(\left(\Delta_0',\ldots,\Delta_{n}'\right)\in A_0\times\ldots\times A_{n}\,|\,\left(\Delta_0',\ldots,\Delta_{n-1}'\right)=(r_0,\ldots, r_{n-1})\right)dP'_{n-1},
\]
где $P'_{n-1}$ --- распределение вектора $\left(\Delta_0',\ldots,\Delta_{n-1}'\right)$ и интегрирование ведется по переменным $r_0,\ldots,r_{n-1}$. Если $(r_0,\ldots, r_{n-1})\notin A_0\times\ldots\times A_{n-1}$, то $\left(\Delta_0',\ldots,\Delta_{n}'\right)\notin A_0\times\ldots\times A_{n}$, и вероятность под знаком интеграла равна нулю. Поэтому последний интеграл равен
\begin{equation}\label{integral}
    \int\limits_{A_0\times\ldots\times A_{n-1}}
    P\left(\Delta_{n}'\in A_{n} \,|\,\Delta_0'=r_0,\ldots,\Delta_{n-1}'=r_{n-1}\right)dP'_{n-1}.
\end{equation}

Зафиксируем числа $r_0,\ldots,r_{n-1}$. Положим $S=r_0+\ldots+r_{n-1}$ и возьмем целое $l$ таким, что $2^l|b-aS|\in\left(\frac12,1\right]$. Покажем, что условное распределение случайной величины $\Delta_n'$ при условии $\Delta_0'=r_0,\ldots,\Delta_{n-1}'=r_{n-1}$ равно условному распределению случайной величины $\Delta_n$ при условии $\Delta_0=r_0,\ldots,\Delta_{n-1}=r_{n-1}$. 

При этих условиях $x_n=S$ и $x_n'=S$. Тогда по определению $l_n'$ получим, что $l_n'=l$. По определению $\Delta_n'$ мы имеем
\[
    \Delta_n'
    =\frac ba -S+\frac{\sigma\xi_n}{2^{l}}
    \sim\mathcal{N}\left(\frac ba-S,\frac{\sigma^2}{2^{2l}}\right),
\]
что есть условное распределение случайной величины $\Delta_n$ при условии $x_n=S$. 

По индукционному предположению мы имеем $P'_{n-1}=P_{n-1}$, где $P_{n-1}$ --- распределение вектора $\left(\Delta_0,\ldots,\Delta_{n-1}\right)$. Значит, интеграл \eqref{integral} равен
\[
    \int\limits_{A_0\times\ldots\times A_{n-1}}
    P\left(\Delta_{n}\in A_{n} \,|\,\Delta_0=r_0,\ldots,\Delta_{n-1}=r_{n-1}\right)dP_{n-1}
\]
\[
    =\int\limits_{\mathbb{R}^n}
    P\left(\left(\Delta_0,\ldots,\Delta_{n}\right)\in A_0\times\ldots\times A_{n}\,|\,\left(\Delta_0,\ldots,\Delta_{n-1}\right)=(r_0,\ldots, r_{n-1})\right)dP_{n-1}.
\]
Последний интеграл равен вероятности $P\left(\left(\Delta_0,\ldots,\Delta_{n}\right)\in A_0\times\ldots\times A_{n}\right)$, что и требовалось показать.
\end{proof}

Полагая, что $\sigma^2=\frac{1}{2a^2\beta^2}$ в предложении \ref{St_sequence_x_n_normal}, мы получаем, что $x_{n+1}'=\frac ba+\frac{\xi_n}{2^{l_n'}\sqrt{2}a\beta}$. Следующая теорема показывает, что эта последовательность при определенных условиях сходится к решению уравнения $ax=b$.

\begin{theorem}\label{Th_xn_normal_convergence_general}
	Пусть $a, b\in\mathbb{R}$, $a\neq0$, $\beta>0$, $\gamma$ --- постоянная Эйлера-Маскерони. Зафиксируем последовательность независимых в совокупности случайных величин $\xi_n\sim\mathcal{N}\left(0,1\right)$, $n\geqslant0$. Построим последовательности случайных величин $l_n$ и $x_n$ по правилу $l_0=0$, $x_0=0$ и при $n\geqslant0$ положим $x_{n+1}=\frac ba+\frac{\xi_n}{2^{l_n}\sqrt{2}a\beta}$, целое число $l_{n+1}$ таково, что $2^{l_{n+1}}|b-ax_{n+1}|\in\left(\frac12,1\right]$. Тогда
	\begin{enumerate}
	    \item[1)] если $s\in\left[1,\beta e^{\gamma/2}\right)$, то $s^n\left(x_n-\frac{b}{a}\right)\xrightarrow[n\rightarrow\infty]{\text{п.н.}}0$,
	    \item[2)] если $s>2\beta e^{\gamma/2}$, то $s^n\left(x_n-\frac{b}{a}\right)\xrightarrow[n\rightarrow\infty]{\text{п.н.}}\infty$,
		\item[3)] если $\beta<\frac12 e^{-\gamma/2}$, то $x_n\xrightarrow[n\rightarrow\infty]{\text{п.н.}}\infty$.
	\end{enumerate}
\end{theorem}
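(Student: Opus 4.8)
The plan is to analyze the error $e_n := x_n - \frac{b}{a}$. Since $x_0 = 0$ we have $e_0 = -\frac{b}{a}$ and $|b - ax_n| = |a|\,|e_n|$, so the integer $l_n$ is characterized by $u_n := 2^{l_n}|a|\,|e_n| \in \left(\tfrac12, 1\right]$. From $x_{n+1} = \frac{b}{a} + \frac{\xi_n}{2^{l_n}\sqrt{2}\,a\beta}$ we get $e_{n+1} = \frac{\xi_n}{2^{l_n}\sqrt{2}\,a\beta}$, hence $|e_{n+1}| = \frac{|\xi_n|}{2^{l_n}\sqrt{2}\,|a|\beta}$. Substituting $2^{-l_n} = \frac{|a|\,|e_n|}{u_n}$ converts this into the multiplicative recursion
\[
    |e_{n+1}| = \frac{|\xi_n|}{\sqrt{2}\,\beta\, u_n}\,|e_n|, \qquad u_n \in \left(\tfrac12, 1\right],
\]
valid for $n \ge 1$; since $\xi_n \ne 0$ almost surely, all $e_n$ are nonzero and every $l_n$ is well defined, while $|e_1| = |\xi_0|/(\sqrt{2}|a|\beta)$ is finite and positive almost surely.

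Taking logarithms and iterating from index $1$, for $n \ge 1$ I obtain
\[
    \ln|e_{n+1}| = \ln|e_1| + \sum_{k=1}^{n}\ln|\xi_k| - n\ln(\sqrt{2}\,\beta) - \sum_{k=1}^{n}\ln u_k .
\]
The crucial feature is that the perturbation $-\ln u_k$ is deterministically bounded: $u_k \in (1/2,1]$ forces $0 \le -\ln u_k < \ln 2$, hence $0 \le -\sum_{k=1}^n \ln u_k < n\ln 2$ regardless of how the $u_k$ depend on the past. Thus the only genuinely random contribution whose growth must be identified is the sum $\sum_{k=1}^n \ln|\xi_k|$ of i.i.d. terms. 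I would then compute its mean increment: writing $\xi^2 \sim \chi^2_1 = \mathrm{Gamma}(1/2, 2)$ and using $\mathbb{E}[\ln Y] = \psi(k)+\ln\theta$ for $Y \sim \mathrm{Gamma}(k,\theta)$ together with $\psi(1/2) = -\gamma - 2\ln 2$, one finds $\mathbb{E}[\ln|\xi|] = \tfrac12\mathbb{E}[\ln\xi^2] = -\tfrac12(\gamma+\ln 2)$.

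By the strong law of large numbers $\frac1n\sum_{k=1}^n \ln|\xi_k| \to -\tfrac12(\gamma+\ln2)$ almost surely, so the $\xi$-plus-constant part of $\frac1n\ln|e_n|$ tends to $-\tfrac12(\gamma+\ln2)-\ln(\sqrt2\beta) = -\ln\!\big(2\beta e^{\gamma/2}\big)$. Combined with the sure bound on $-\sum\ln u_k$, this sandwiches
\[
    -\ln\!\big(2\beta e^{\gamma/2}\big) \le \liminf_{n}\tfrac1n\ln|e_n| \le \limsup_{n}\tfrac1n\ln|e_n| \le -\ln\!\big(\beta e^{\gamma/2}\big)
\]
almost surely. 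For the first item, $s < \beta e^{\gamma/2}$ gives $\ln s + \limsup_n\tfrac1n\ln|e_n| \le \ln\frac{s}{\beta e^{\gamma/2}} < 0$, so $\tfrac1n\ln(s^n|e_n|)\to-\infty$ and $s^n e_n \to 0$. For the second, $s > 2\beta e^{\gamma/2}$ gives $\ln s + \liminf_n\tfrac1n\ln|e_n| \ge \ln\frac{s}{2\beta e^{\gamma/2}} > 0$, hence $s^n|e_n|\to\infty$. The third item is the second with $s=1$: the hypothesis $\beta < \tfrac12 e^{-\gamma/2}$ is exactly $1 > 2\beta e^{\gamma/2}$, whence $|e_n|\to\infty$ and $x_n$ diverges.

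The main obstacle is precisely that the exponents $l_n$ (equivalently the factors $u_n$) are determined by the entire history and are correlated with the Gaussian increments, so one cannot apply any law of large numbers to $\sum\ln u_k$ or even identify its limiting behaviour. The resolution is to renounce evaluating that sum and to keep only the sure two-sided estimate $u_n \in (1/2,1]$; this is what produces the gap of $\ln 2$ between the upper and lower Lyapunov exponents, and thereby the factor-of-two gap between the thresholds $\beta e^{\gamma/2}$ and $2\beta e^{\gamma/2}$. Consequently the intermediate regime $s \in [\beta e^{\gamma/2}, 2\beta e^{\gamma/2}]$ is left undecided by this argument, in agreement with the statement of the theorem.
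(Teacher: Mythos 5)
Your proof is correct and follows essentially the same route as the paper: both reduce the dynamics to the multiplicative recursion $|e_{n+1}|=\frac{|\xi_n|}{\sqrt{2}\,\beta\,u_n}|e_n|$ with the deterministic bound $u_n\in(1/2,1]$ (which is exactly what produces the factor-of-two gap between the two thresholds), compute $\mathbb{E}\ln|\xi|=-\tfrac12(\gamma+\ln 2)$, and finish with the strong law of large numbers, which the paper merely packages as Lemma~\ref{Lemma_product_convergence}. The only blemish is the claim in item~1 that $\tfrac1n\ln\left(s^n|e_n|\right)\to-\infty$: what your estimate actually gives is a negative $\limsup$ of $\tfrac1n\ln\left(s^n|e_n|\right)$, which still yields $s^n e_n\to0$.
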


\begin{remark}
    Процесс, описанный в теореме соответствует процессу работы квантового компьютера. Процесс последовательного улучшения решения описывался в статьях \cite{RogersFloatingpoint, seismic, MML2022,Elliptic2023}, однако теоретические вопросы сходимости подобных итеративных алгоритмов не исследовались. 

    Из первого пункта теоремы следует, что если $\beta>e^{-\gamma/2}\approx\frac34$, то последовательность $x_n$ будет сходиться к решению уравнения $ax=b$ почти наверное и притом с экспоненциальной скоростью. Второй пункт устанавливает верхнюю границу скорости сходимости. Третий пункт устанавливает достаточное условие расходимости последовательности $x_n$: если $\beta$ мало (точность работы QA слишком плоха), то последовательность $x_n$ будет расходиться.
\end{remark}

Для доказательства теоремы \ref{Th_xn_normal_convergence_general} нам понадобится следующая лемма, непосредственно следующая из усиленного закона больших чисел Колмогорова \cite{Shiryaev}.

\begin{lemma}\label{Lemma_product_convergence}
    Пусть $\delta>0$ и пусть $\left(X_i\right)_{i=1}^{\infty}$ --- независимые в совокупности случайные величины такие, что для любого натурального $i$ существуют $\mathbb{E}\ln{|X_i|}$ и $\mathbb{E}\ln^2{|X_i|}$. Пусть дисперсии $\mathrm{Var}\left(\ln{|X_i|}\right)$ ограничены в совокупности. Тогда
    \begin{enumerate}
        \item[1)] если $\mathbb{E}\ln{|X_i|}<-\delta$ для любого $i$, то $X_1X_2\cdot\ldots\cdot X_n\xrightarrow[n\rightarrow\infty]{\text{п.н.}}0$,
        \item[2)] если $\mathbb{E}\ln{|X_i|}>\delta$ для любого $i$, то $X_1X_2\cdot\ldots\cdot X_n\xrightarrow[n\rightarrow\infty]{\text{п.н.}}\infty$.
    \end{enumerate}
\end{lemma}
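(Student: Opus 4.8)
Это Лемма \ref{Lemma_product_convergence} — вариант усиленного закона больших чисел (УЗБЧ) Колмогорова для произведений. Даны независимые случайные величины $X_i$, у которых существуют $\mathbb{E}\ln|X_i|$ и $\mathbb{E}\ln^2|X_i|$, причём дисперсии $\mathrm{Var}(\ln|X_i|)$ ограничены в совокупности. Нужно показать: если $\mathbb{E}\ln|X_i| < -\delta$ для всех $i$, то произведение $X_1\cdots X_n \to 0$ п.н., а если $\mathbb{E}\ln|X_i| > \delta$, то $\to \infty$ п.н.

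**Основная идея.** Перейти к логарифмам и свести произведение к сумме, применив УЗБЧ. Обозначу $Y_i = \ln|X_i|$. По условию все $Y_i$ имеют конечные первые и вторые моменты и независимы в совокупности. Тогда

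$$\ln\left|X_1 X_2 \cdots X_n\right| = \sum_{i=1}^n Y_i.$$

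Пусть $S_n = \sum_{i=1}^n Y_i$ и $m_i = \mathbb{E} Y_i$. Центрированные величины $Z_i = Y_i - m_i$ независимы, имеют нулевое среднее, а их дисперсии $\mathrm{Var}(Z_i) = \mathrm{Var}(Y_i)$ ограничены в совокупности некоторой константой $C$. Следовательно, выполнено условие УЗБЧ Колмогорова для разнораспределённых величин: $\sum_{i=1}^\infty \mathrm{Var}(Z_i)/i^2 \leqslant C\sum 1/i^2 < \infty$. Поэтому $\frac{1}{n}\sum_{i=1}^n Z_i \to 0$ п.н., то есть

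$$\frac{1}{n}\left(S_n - \sum_{i=1}^n m_i\right) \xrightarrow[n\to\infty]{\text{п.н.}} 0.$$

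**Завершение через оценку средних.** В случае 1) по условию $m_i < -\delta$, так что $\frac{1}{n}\sum_{i=1}^n m_i < -\delta$ для всех $n$. Отсюда почти наверное
$$\limsup_{n\to\infty} \frac{S_n}{n} \leqslant \limsup_{n\to\infty}\frac{1}{n}\sum_{i=1}^n m_i + 0 \leqslant -\delta < 0,$$
а значит $S_n \to -\infty$ п.н., и $|X_1\cdots X_n| = e^{S_n} \to 0$ п.н., откуда $X_1\cdots X_n \to 0$ п.н. В случае 2) аналогично $\frac{1}{n}\sum m_i > \delta$ влечёт $\liminf \frac{S_n}{n} \geqslant \delta > 0$, поэтому $S_n \to +\infty$ и $|X_1\cdots X_n| = e^{S_n}\to\infty$ п.н. (здесь знаки самих $X_i$ несущественны, так как сходимость к $\infty$ понимается по модулю).

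**Где главная трудность.** Концептуально лемма почти дословно повторяет формулировку УЗБЧ Колмогорова, поэтому содержательного препятствия нет — вся работа в том, чтобы корректно сослаться на нужную форму закона больших чисел для неодинаково распределённых независимых величин и аккуратно проверить условие Колмогорова $\sum \mathrm{Var}(Y_i)/i^2 < \infty$, которое немедленно следует из равномерной ограниченности дисперсий. Единственная тонкость, заслуживающая явного замечания, — почему из среднего арифметического средних, отделённого от нуля, следует нужная асимптотика суммы (а не только среднего): ключ в том, что при $m_i < -\delta$ выполнено $\sum_{i=1}^n m_i < -\delta n \to -\infty$, и добавление члена $o(n)$ не меняет предела. Предположение о существовании $\mathbb{E}\ln^2|X_i|$ нужно именно для конечности дисперсий $Y_i$; без него классический вариант УЗБЧ неприменим напрямую.
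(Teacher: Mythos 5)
Ваше доказательство верно и идёт ровно тем путём, который подразумевает статья: в ней лемма лишь объявлена непосредственным следствием усиленного закона больших чисел Колмогорова (со ссылкой на Ширяева), а вы аккуратно выписываете этот вывод — переход к логарифмам, проверку условия $\sum_i \mathrm{Var}(\ln|X_i|)/i^2<\infty$ и заключительную оценку $\limsup S_n/n\leqslant-\delta$ (соответственно $\liminf S_n/n\geqslant\delta$). Замечаний нет.
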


\begin{proof}[Доказательство теоремы \ref{Th_xn_normal_convergence_general}]

Проведем предварительные построения. Рассмотрим случайную величину $z_n=\frac ba-x_n$. Оценим сверху $|z_{n+1}|$, используя определение случайных величин $x_n$ и $l_n$:
\begin{equation*}
    |z_{n+1}|
    =\left|\frac{\xi_n}{2^{l_n}\sqrt{2}a\beta}\right|
    =\left|\frac{(b-ax_n)\xi_n}{2^{l_n}(b-ax_n)\sqrt{2}a\beta}\right|
    <\left|\frac{\sqrt{2}(b-ax_n)\xi_n}{a\beta}\right|
    =\frac{\sqrt{2}}{\beta}|z_n\xi_n|.
\end{equation*}
Так как $|z_1|=\frac{|\xi_0|}{\sqrt{2}a\beta}$, то
\begin{equation}\label{estimate_z_n_up}
    |z_{n+1}|<\left|\frac{1}{2a}\right|\left(\frac{\sqrt{2}}{\beta}\right)^{n+1}|\xi_{0}\ldots\xi_n|.
\end{equation}
Аналогично получаем оценку $|z_{n+1}|$ снизу:
\begin{equation}\label{estimate_z_n_down}
    |z_{n+1}|
    \geqslant\left|\frac 1a\right|\left(\frac{1}{\sqrt{2}\beta}\right)^{n+1}|\xi_{0}\ldots\xi_{n}|.
\end{equation}
Найдем матожидание случайной величины $\ln|\xi_0|$:
\begin{equation*}
    \mathbb{E}\ln|\xi_0|
    =\frac{1}{\sqrt{2\pi}}\int_{-\infty}^{\infty}e^{-\frac{x^2}{2}}\ln{|x|}\,dx
    =\frac{2}{\sqrt{\pi}}\int_{0}^{\infty}e^{-x^2}\ln{\left(\sqrt{2}x\right)}\,dx 
    =\ln{\frac{1}{\sqrt{2}e^{\gamma/2}}},
\end{equation*}
где мы воспользовались соотношением для $\gamma$ \cite{EulerConstant}:
\begin{equation*}
    \int_{0}^{\infty}e^{-x^2}\ln{x}\,dx
    =-\frac{\sqrt{\pi}}4\left(\gamma+\ln{4}\right).
\end{equation*}
Тогда 
\begin{equation}\label{expectations_xis}
    \mathbb{E}\ln\frac{\sqrt{2}|\xi_0|}{\beta}
    =\ln{\frac{1}{\beta e^{\gamma/2}}},~~
    \mathbb{E}\ln\frac{|\xi_0|}{\sqrt{2}\beta}
    =\ln{\frac{1}{2\beta e^{\gamma/2}}}.
\end{equation}

Докажем пункт 1. Достаточно показать, что $\left(\frac{s\sqrt{2}}{\beta}\right)^n|\xi_{0}\ldots\xi_{n-1}|\xrightarrow[n\rightarrow\infty]{\text{п.н.}}0$ исходя из неравенства \eqref{estimate_z_n_up}. Ввиду \eqref{expectations_xis} и того, что $s\in\left[1,\beta e^{\gamma/2}\right)$, имеем
\[
    \mathbb{E}\ln\frac{s\sqrt{2}|\xi_0|}{\beta}
    =\ln{\frac{1}{\beta e^{\gamma/2}}}+\ln{s}<0.
\]
Значит, по лемме \ref{Lemma_product_convergence} произведение $\left(\frac{s\sqrt{2}}{\beta}\right)^n|\xi_{0}\ldots\xi_{n-1}|$ сходится к нулю почти наверное.

Докажем пункт 2. Достаточно показать, что $\left(\frac{s}{\sqrt{2}\beta}\right)^n|\xi_{0}\ldots\xi_{n-1}|\xrightarrow[n\rightarrow\infty]{\text{п.н.}}\infty$ исходя из неравенства \eqref{estimate_z_n_down}. Ввиду \eqref{expectations_xis} и того, что $s>2\beta e^{\gamma/2}$, имеем
\[
    \mathbb{E}\ln\frac{s|\xi_0|}{\sqrt{2}\beta}
    =\ln{\frac{1}{2\beta e^{\gamma/2}}}+\ln{s}>0.
\]
Значит, по лемме \ref{Lemma_product_convergence} произведение $\left(\frac{s}{\sqrt{2}\beta}\right)^n|\xi_{0}\ldots\xi_{n-1}|$ сходится к бесконечности почти наверное.

Докажем пункт 3. Достаточно показать, что $\left(\frac{1}{\sqrt{2}\beta}\right)^n|\xi_{0}\ldots\xi_{n-1}|\xrightarrow[n\rightarrow\infty]{\text{п.н.}}\infty$ исходя из неравенства \eqref{estimate_z_n_down}. Ввиду \eqref{expectations_xis} и того, что $\beta<\frac12 e^{-\gamma/2}$, имеем $\mathbb{E}\ln\frac{|\xi_0|}{\sqrt{2}\beta}=\ln{\frac{1}{2\beta e^{\gamma/2}}}>0$. Значит, по лемме \ref{Lemma_product_convergence} произведение $\left(\frac{1}{\sqrt{2}\beta}\right)^n|\xi_{0}\ldots\xi_{n-1}|$ сходится к бесконечности почти наверное.
\end{proof}


\subsection{Общий подход к исследованию сходимости алгоритмов решения линейного уравнения.}
\label{subsection_improving_general}

В этом разделе мы рассмотрим общий подход к построению и исследованию сходимости алгоритма, решающего уравнение $ax=b$. Мы рассмотрим общую схему последовательных приближений и докажем теорему \ref{Th_general_model}, позволяющую оценить скорость сходимости подобных алгоритмов. Домножая $a$ и $b$ на одинаковую степень двойки, можно добиться выполнения неравенства
\begin{equation}\label{eqa12}
    1/2 \leqslant a < 1.    
\end{equation}
В дальнейшем в статье мы будем предполагать, что выполнено неравенство \eqref{eqa12}.

Вернемся к построению последовательности приближений к решению уравнения $ax=b$. Имея очередное фиксированное приближение $x_n$, мы решаем на QA уравнение \eqref{eq_delta_n_tilde} относительно $\widetilde{\Delta}_n$, где целое $l_n$ выбирается в соответствии с \eqref{l_n}. Следующее приближение вычисляется как $x_{n+1}=x_n+2^{-l_n}\widetilde{\Delta}_n$. Распределение случайной величины $\widetilde{\Delta}_n$ зависит от выбранного нами алгоритма и определяется значениями $c_n$, $\sign(b-ax_n)$, $a$, $\beta$, где
\begin{equation}\label{eq_c_n}
    c_n=\frac{1}{2^{l_n}|b-ax_n|}.
\end{equation}
Заметим, что $c_n\in[1,2)$. В этом разделе мы не фиксируем конкретный алгоритм и соответствующее распределение $\widetilde{\Delta}_n$. В дальнейшем мы будем предполагать, что зависимость от $\sign(b-ax_n)$ имеет специальный вид, а именно существует такая функция $q(u, c, a, \beta)$, что если $\eta$ --- случайная величина, равномерно распределенная на $[0, 1]$, то
\begin{equation}\label{eq_q_st}
    \widetilde{\Delta}_n\,|\,x_n \stackrel{d}{=} \sign(b-ax_n) q(\eta, c_n, a, \beta)\,|\,x_n.
\end{equation}
Условие \eqref{eq_q_st} означает, что распределение поправок $\widetilde{\Delta}_n$ в случае положительных и отрицательных невязок отличается лишь знаком.

Заметим, что если функция $q(\cdot,c_n,a,\beta)$ равна обратной функции распределения закона ${\cal N}\left(\frac{1}{ac_n},\frac{1}{2a^2\beta^2}\right)$, то $q(\eta,c_n,a,\beta)\sim{\cal N}\left(\frac{1}{ac_n},\frac{1}{2a^2\beta^2}\right)$, и поправка $\widetilde{\Delta}_n$ распределена как в \eqref{Delta_n_tilde_distr_norm}.

По аналогии с предложением \ref{St_sequence_x_n_normal} верно следующее.

\begin{proposition}\label{prop_new_cut_norm_sequence_vsh}
    Пусть $b\neq0$, $\beta>0$. Пусть $\eta_n$, $n\geqslant0$ --- независимые в совокупности случайные величины, равномерно распределенные на промежутке $[0, 1]$. Зафиксируем функцию $q(\eta, c, a, \beta)$, определяемую выбранным алгоритмом последовательных приближений. 
    
    Введем функции 
    \[
        G_1(u_0)=q(u_0,1/b,a,\beta)-\frac ba,
    \]
    \begin{equation}
        G_{n+1}(u_0,u_1,\ldots,u_{n})
        =G_n(u_0,u_1,\ldots,u_{n-1})
        \left(1-a c_n' q(u_n,c_n',a,\beta)\right),~n\geqslant1, \label{G_n_general}
    \end{equation}
    где $u_i\in[0,1]$, $c_n'=\frac{1}{2^{l_n'}\left|aG_n\right|}$ и целое $l_n'$ выбрано так, что $2^{l_n'}|aG_n|\in\left(\frac12,1\right]$.

    Пусть $x_0=l_0=0$. Пусть $\widetilde{\Delta}_0\stackrel{d}{=}q(\eta_0,1/b,a,\beta)$ и при $n\geqslant1$ выполнено \eqref{eq_q_st} для $\eta=\eta_n$, где $l_n$ и $c_n$ определяются в \eqref{l_n} и \eqref{eq_c_n}. При $n\geqslant0$ положим
    \[
        x_{n+1}=x_n+2^{-l_n}\widetilde{\Delta}_n,
    \]
    \begin{equation}\label{x_n_new_general}
	x'_{n+1}=\frac ba+G_{n+1}(\eta_0,\ldots,\eta_n).
    \end{equation}

    Тогда $(x_1,\ldots,x_n)\stackrel{d}{=}(x_1',\ldots,x_n')$ для любого $n\geqslant1$.
\end{proposition}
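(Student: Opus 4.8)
The plan is to follow the scheme used for Proposition~\ref{St_sequence_x_n_normal}: argue by induction on $n$, reduce the equality in distribution of the two vectors to equality of their probabilities on measurable rectangles $A_1\times\cdots\times A_n$ (these form a $\pi$-system generating the Borel $\sigma$-algebra of $\mathbb{R}^n$, so agreement on them forces agreement of the laws), and then exploit the total-probability formula together with the fact that both sequences are Markov with the same one-step transition rule.

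Before the induction I would recast the explicit primed recursion \eqref{G_n_general} into the additive ``improvement'' form that defines the unprimed sequence. Set $G_n=x_n'-\frac ba$, so that the residual of the primed approximation is $b-ax_n'=-aG_n$; hence $|b-ax_n'|=|aG_n|$ and $\sign(b-ax_n')=-\sign(aG_n)$, which shows that $l_n'$ and $c_n'$ are the very same deterministic functions of $x_n'$ that \eqref{l_n} and \eqref{eq_c_n} produce from $x_n$. From $c_n'=1/(2^{l_n'}|aG_n|)$ we get $2^{-l_n'}=c_n'|aG_n|$, and therefore
\[
    2^{-l_n'}\sign(b-ax_n')\,q(\eta_n,c_n',a,\beta)
    =-a\,c_n'\,G_n\,q(\eta_n,c_n',a,\beta)
    =G_{n+1}-G_n.
\]
This is exactly $x_{n+1}'-x_n'$, so $x_{n+1}'=x_n'+2^{-l_n'}\widetilde{\Delta}_n'$ with $\widetilde{\Delta}_n'=\sign(b-ax_n')q(\eta_n,c_n',a,\beta)$, matching the prescription \eqref{eq_q_st}. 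Consequently both chains share the transition rule: from a current value $r$ the next value is distributed as $r+2^{-l}\sign(b-ar)q(\eta,c,a,\beta)$, where $l,c$ are read off from $r$ via \eqref{l_n}, \eqref{eq_c_n} and $\eta$ is a fresh uniform variable; for the primed chain independence of $\eta_n$ from the past holds because $x_n'$ is a measurable function of $\eta_0,\ldots,\eta_{n-1}$.

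The base case $n=1$ is direct: $x_1=x_0+2^{-l_0}\widetilde{\Delta}_0=\widetilde{\Delta}_0\stackrel{d}{=}q(\eta_0,1/b,a,\beta)$, while $x_1'=\frac ba+G_1(\eta_0)=q(\eta_0,1/b,a,\beta)$, so $x_1\stackrel{d}{=}x_1'$. For the inductive step I assume $(x_1,\ldots,x_n)\stackrel{d}{=}(x_1',\ldots,x_n')$ and compute, via the total-probability formula, the probability of a rectangle $A_1\times\cdots\times A_{n+1}$ by conditioning on the length-$n$ prefix. By the preceding paragraph the conditional law of $x_{n+1}$ given $(x_1,\ldots,x_n)=(r_1,\ldots,r_n)$ and that of $x_{n+1}'$ given $(x_1',\ldots,x_n')=(r_1,\ldots,r_n)$ depend only on $r_n$ and coincide; integrating this common conditional probability against the laws of the two prefixes, which agree by the inductive hypothesis, yields equality on every rectangle, hence $(x_1,\ldots,x_{n+1})\stackrel{d}{=}(x_1',\ldots,x_{n+1}')$.

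The one genuinely delicate point is the algebraic recasting of \eqref{G_n_general} in the second paragraph: one must keep precise track of the sign identity $\sign(b-ax_n')=-\sign(aG_n)$ and of the cancellation $2^{-l_n'}=c_n'|aG_n|$, so that the multiplicative factor $1-ac_n'q$ reproduces precisely the additive update $x_{n+1}'-x_n'$. Once this correspondence is established, the measure-theoretic bookkeeping is routine and repeats, almost verbatim, the argument of Proposition~\ref{St_sequence_x_n_normal}, so I would present it only briefly.
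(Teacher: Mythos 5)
Ваше доказательство корректно и идёт тем же путём, который подразумевает статья: в тексте предложение \ref{prop_new_cut_norm_sequence_vsh} оставлено без доказательства со ссылкой «по аналогии с предложением \ref{St_sequence_x_n_normal}», и ваша индукция по прямоугольникам с формулой полной вероятности --- именно эта аналогия, дополненная аккуратной (и верной) проверкой того, что мультипликативная рекурсия \eqref{G_n_general} через тождества $b-ax_n'=-aG_n$ и $2^{-l_n'}=c_n'|aG_n|$ воспроизводит аддитивное обновление $x_{n+1}'=x_n'+2^{-l_n'}\widetilde{\Delta}_n'$.
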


\bigskip

Здесь и далее мы будем рассматривать последовательность $x_n$, заданную равенством \eqref{x_n_new_general}. Последовательность $x_n$ определяется выбором функции $q$. В следующей теореме мы будем использовать обозначения и определения из предложения \ref{prop_new_cut_norm_sequence_vsh}.

\begin{theorem}
\label{Th_general_model}
    Обозначим 
    \begin{equation}\label{r_th}
        r(u, a, \beta) = \max_{c \in [1, 2]}|1-c \cdot a \cdot q(u, c, a, \beta)|,~u\in[0,1].
    \end{equation}
    Пусть $E(a, \beta)$ --- математическое ожидание случайной величины $\ln r(\eta, a, \beta)$, где $\eta$ --- случайная величина, равномерно распределенная на $[0,1]$:
    $$
        E(a, \beta) = \int_0^1 \ln r(u, a, \beta)du.
    $$
    
    Тогда 
    \begin{enumerate}
        \item[1)] если $E(a, \beta) < 0$, то  $x_n\xrightarrow[n\to\infty]{\text{п.н.}}\frac ba$,
        \item[2)] если $\ln s+E(a, \beta)<0$ , то $s^n\left(x_n-\frac ba\right) \xrightarrow[n\to\infty]{\text{п.н.}}0$.
    \end{enumerate}
\end{theorem}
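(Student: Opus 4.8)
The plan is to unroll the recurrence \eqref{G_n_general} into a product and then dominate that product by an honestly i.i.d.\ sequence, to which Lemma \ref{Lemma_product_convergence} (equivalently, Kolmogorov's strong law) applies. Since Proposition \ref{prop_new_cut_norm_sequence_vsh} gives $(x_1,\dots,x_n)\stackrel{d}{=}(x_1',\dots,x_n')$ for every $n$, the two sequences have the same law on $\mathbb{R}^\infty$, and almost sure convergence of $s^n(x_n-b/a)$ is a measurable event of this law; hence it suffices to prove the statements for $x_n'=\frac ba+G_n(\eta_0,\dots,\eta_{n-1})$, for which the error is exactly $x_n'-\frac ba=G_n$. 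Iterating \eqref{G_n_general} yields
\[
  G_{n+1}(\eta_0,\dots,\eta_n)=G_1(\eta_0)\prod_{k=1}^{n}W_k,\qquad W_k=1-a\,c_k'\,q(\eta_k,c_k',a,\beta),
\]
where $c_k'=\frac{1}{2^{l_k'}|aG_k|}\in[1,2)$ by the choice of $l_k'$.

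The decisive step — the one that makes the argument work — is to pass from the factors $W_k$, which are \emph{not} independent (each $c_k'$ is a function of $\eta_0,\dots,\eta_{k-1}$ through $G_k$), to an i.i.d.\ majorant. Because $c_k'\in[1,2]$, the definition \eqref{r_th} of $r$ as a maximum over $c\in[1,2]$ gives the pointwise bound $|W_k|\le r(\eta_k,a,\beta)$, and the right-hand side depends on $\eta_k$ only. As the $\eta_k$ are jointly independent and uniform on $[0,1]$, the variables $r(\eta_k,a,\beta)$, $k\ge1$, are i.i.d.\ with $\mathbb{E}\ln r(\eta_k,a,\beta)=E(a,\beta)$. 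This ``decoupling by the worst-case constant'' is precisely the role of the maximum in \eqref{r_th}.

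To conclude, I would apply Lemma \ref{Lemma_product_convergence}. For statement 2), set $X_k=s\,r(\eta_k,a,\beta)$; these are i.i.d.\ with $\mathbb{E}\ln|X_k|=\ln s+E(a,\beta)<0$, so choosing $\delta\in\bigl(0,-(\ln s+E(a,\beta))\bigr)$ the hypotheses hold (the variances are constant, hence uniformly bounded) and part 1) of the lemma gives $\prod_{k=1}^n X_k\to0$ a.s. Combined with the product bound,
\[
  s^{n+1}|G_{n+1}|\le s\,|G_1(\eta_0)|\prod_{k=1}^{n}\bigl(s\,r(\eta_k,a,\beta)\bigr)\xrightarrow[n\to\infty]{}0\quad\text{a.s.},
\]
which is statement 2). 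Statement 1) is the special case $s=1$: then $\ln 1+E(a,\beta)=E(a,\beta)<0$ forces $G_n\to0$, i.e.\ $x_n\to b/a$ a.s.

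The main obstacle I anticipate is not the limit theorem itself but verifying its applicability: one needs $\ln r(\eta,a,\beta)$ to be integrable, i.e.\ $\int_0^1|\ln r(u,a,\beta)|\,du<\infty$, so that $E(a,\beta)$ is a genuine finite mean. Because the $r(\eta_k)$ are i.i.d., the classical Kolmogorov SLLN underlying Lemma \ref{Lemma_product_convergence} needs only this first absolute moment, so the second-moment/variance requirement of the lemma is automatic or can be bypassed. The integrability itself reduces to $r$ not vanishing or blowing up too fast; for the concrete distributions treated in Sections \ref{subsection_improving_solution_cut_normal_distr} and \ref{subsection_improving_solution_Boltzmann_distr} this should be read off from the explicit form of $q$. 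A secondary technical point is the reduction from $x_n$ to $x_n'$, where I would invoke that equality of all finite-dimensional distributions (Proposition \ref{prop_new_cut_norm_sequence_vsh}) gives equality of the laws of the full sequences, so the almost-sure statements transfer.
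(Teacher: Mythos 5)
Your proposal is correct and follows essentially the same route as the paper's proof: unrolling \eqref{G_n_general} into a product, dominating each (non-independent) factor by the i.i.d.\ majorant $r(\eta_k,a,\beta)$ via the maximum over $c\in[1,2]$ in \eqref{r_th}, and applying Lemma \ref{Lemma_product_convergence}. The additional care you take --- the transfer of almost-sure statements from $x_n'$ to $x_n$ via Proposition \ref{prop_new_cut_norm_sequence_vsh} and the integrability of $\ln r(\eta,a,\beta)$ --- is left implicit in the paper but is fully consistent with its argument.
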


\begin{proof}
    Докажем первый пункт. Исходя из \eqref{x_n_new_general}, достаточно показать, что $G_{n+1}(\eta_0,\ldots,\eta_n)\xrightarrow[]{\text{п.н.}}0$. Используя \eqref{G_n_general}, получаем
    \begin{equation*}
        \left|G_{n+1}(\eta_0,\ldots,\eta_n)\right|
        =\left|G_1(\eta_0)\right|\prod\limits_{i=1}^n \left|1-a\,c_i\, q(\eta_i,c_i,a,\beta)\right|
        \leqslant
        \left|G_1(\eta_0)\right|\prod\limits_{i=1}^n r(\eta_i,a,\beta).
    \end{equation*}
    Так как математические ожидания случайных величин $\ln r(\eta_i, a, \beta)$ меньше нуля, то по лемме \ref{Lemma_product_convergence} получаем требуемое.

    Докажем второй пункт. Достаточно показать, что $s^{n+1}G_{n+1}(\eta_0,\ldots,\eta_n)\xrightarrow[]{\text{п.н.}}0$. Аналогично получаем
    \begin{equation*}
        s^{n+1}\left|G_{n+1}(\eta_0,\ldots,\eta_n)\right|
        \leqslant
        s\left|G_1(\eta_0)\right|\prod\limits_{i=1}^n \left(s\,r(\eta_i,a,\beta)\right).
    \end{equation*}
    Из неравенства $\mathbb{E}\ln(s\,r(\eta_i,a,\beta))<0$ следует требуемое.
\end{proof}

\subsection{Модели вычислений, основанные на усеченном нормальном распределении}
\label{subsection_improving_solution_cut_normal_distr}
\indent

В разделе \ref{section_preliminary} мы рассматривали представление \eqref{expression_for_x} переменной $x$ по положительным и отрицательным степеням двойки. В текущем разделе мы рассмотрим более ``гибкое'' представление:
\begin{equation}\label{expression_for_x_2}
    x=(d_2-d_1)\sum_{i=1}^{r}q_i2^{-i}+d_1,
\end{equation}
где $q_i\in\{0,1\}$, $d_1<d_2$, $r\in\mathbb{N}$. В таком представлении $x$ принимает значения в промежутке $[d_1,d_2)$. Коэффициенты $Q_{ij}$ в модели QUBO \eqref{qubo} находятся из подстановки представления \eqref{expression_for_x_2} в функцию \eqref{hamiltonian_expression}.

Предположим, что количество кубитов в QA стремится к бесконечности, то есть $r\rightarrow\infty$. Посмотрим, как при этом ведет себя распределение Больцмана на множестве, состоящем из чисел вида \eqref{expression_for_x_2}, с целевой функцией $(ax-b)^2$. Для начала введем следующее определение.

\begin{definition}\label{def_cut_normal}
    Пусть $\sigma>0$, $\mu,d_1,d_2\in\mathbb{R}$, $d_1<d_2$. Усеченное нормальное распределение $\mathcal{N}(\mu,\sigma^2,d_1,d_2)$ --- это распределение с функцией плотности
    \[
        f(t)\propto
        e^{-\frac{(t-\mu)^2}{2\sigma^2}}
        \mathbf{1}_{(d_1,d_2)}(t).
    \] 
    Если $d_1>d_2$, то обозначение $\mathcal{N}(\mu,\sigma^2,d_1,d_2)$ будет пониматься как усеченное нормальное распределение $\mathcal{N}(\mu,\sigma^2,d_2,d_1)$, а обозначение $(d_1,d_2)$ будет пониматься как интервал $(d_2,d_1)$.
\end{definition}

\begin{proposition}\label{St_limit_measure_cut_normal}
    Пусть $\beta>0$, $a,b\in\mathbb{R}$, $a\neq0$, $d_1<d_2$, $r\in\mathbb{N}$. Тогда
    \begin{equation*}
        \mathrm{B}\left(
        \beta,
        \left\{(d_2-d_1)\sum_{i=1}^{r}q_i2^{-i}+d_1\,:\,q_i\in\{0,1\}\right\},
        (ax-b)^2\right)
        \xrightarrow[r\rightarrow\infty]{d}
        \mathcal{N}\left(\frac{b}{a},\frac{1}{2a^2\beta^2},d_1,d_2\right).
    \end{equation*}
\end{proposition}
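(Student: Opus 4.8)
План состоит в том, чтобы свести утверждение к стандартной теореме о сходимости сумм Римана. Прежде всего я замечу, что больцмановский вес с точностью до нормировки совпадает с плотностью целевого усеченного нормального распределения. В самом деле, при $\sigma^2=\frac{1}{2a^2\beta^2}$ выполнено
\[
    -\frac{(t-b/a)^2}{2\sigma^2}=-a^2\beta^2\left(t-\frac ba\right)^2=-\beta^2(at-b)^2,
\]
поэтому плотность распределения $\mathcal{N}\!\left(\frac ba,\frac{1}{2a^2\beta^2},d_1,d_2\right)$ (см. определение \ref{def_cut_normal}) пропорциональна $e^{-\beta^2(at-b)^2}\mathbf{1}_{(d_1,d_2)}(t)$ --- ровно тому выражению, которое задает веса в определении \ref{def_Boltzmann} с целевой функцией $(ax-b)^2$.

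Далее я использую, что носитель распределения Больцмана есть равномерная сетка. Действительно, при $q_i\in\{0,1\}$ сумма $\sum_{i=1}^r q_i2^{-i}$ пробегает все кратные $2^{-r}$ числа из $[0,1)$, так что точки имеют вид $x_k=d_1+(d_2-d_1)k2^{-r}$, $k=0,\ldots,2^r-1$, а шаг сетки $h_r=(d_2-d_1)2^{-r}$ стремится к нулю. Чтобы установить сходимость по распределению, достаточно по определению слабой сходимости проверить, что для любой ограниченной непрерывной функции $g$ среднее $\mathbb{E}_{\mu_r}g$ по распределению Больцмана $\mu_r$ сходится к среднему $\mathbb{E}_{\mu}g$ по предельному усеченному нормальному распределению $\mu$.

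Записав
\[
    \mathbb{E}_{\mu_r}g
    =\frac{\sum_{k} g(x_k)e^{-\beta^2(ax_k-b)^2}h_r}{\sum_{k}e^{-\beta^2(ax_k-b)^2}h_r},
\]
я узнаю в числителе и знаменателе суммы Римана для непрерывных на отрезке $[d_1,d_2]$ функций $t\mapsto g(t)e^{-\beta^2(at-b)^2}$ и $t\mapsto e^{-\beta^2(at-b)^2}$. При $h_r\to0$ они сходятся к соответствующим интегралам, откуда
\[
    \mathbb{E}_{\mu_r}g\xrightarrow[r\to\infty]{}
    \frac{\int_{d_1}^{d_2}g(t)e^{-\beta^2(at-b)^2}\,dt}{\int_{d_1}^{d_2}e^{-\beta^2(at-b)^2}\,dt}=\mathbb{E}_{\mu}g,
\]
причем знаменатель строго положителен, так что предел корректен.

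Основная (и, по сути, единственная) тонкость --- аккуратно проследить, что сетка $\{x_k\}$ покрывает полуинтервал $[d_1,d_2)$ с исчезающим шагом, а значит суммы Римана с левыми концами сходятся именно к интегралам по всему $[d_1,d_2]$; это гарантируется непрерывностью (и, следовательно, равномерной непрерывностью на компакте) подынтегральных выражений. Все остальное --- прямое применение определения слабой сходимости и стандартного факта о пределе сумм Римана.
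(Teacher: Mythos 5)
Ваше доказательство корректно и полно. Отмечу, что в самой статье предложение \ref{St_limit_measure_cut_normal} приводится вовсе без доказательства (как и предложение \ref{St_limit_measure}, названное очевидным), так что сравнивать не с чем; ваш аргумент --- совпадение больцмановского веса с плотностью усеченного нормального закона с точностью до нормировки, равномерность сетки с шагом $(d_2-d_1)2^{-r}$ и предельный переход в суммах Римана для $\mathbb{E}_{\mu_r}g$ при ограниченной непрерывной $g$ --- это именно то стандартное обоснование, которое авторы подразумевают, и он восполняет опущенную в тексте выкладку.
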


Таким образом, мы можем использовать усеченное нормальное распределение в качестве приближения к распределению Больцмана. 

Из формул \eqref{l_n},\eqref{eq_delta_n_tilde},\eqref{eqa12} следует, что 
\begin{equation}\label{eq_delta_12_st}
    |\widetilde{\Delta}_n| \leqslant 2, \quad \sign(b-ax_n) \widetilde{\Delta}_n \in [1/2, 2].
\end{equation}

Мы будем рассматривать несколько различных алгоритмов поиска поправки. На каждом шаге мы будем искать решение уравнения \eqref{eq_delta_n_tilde} на промежутке $\sign(b-ax_n)[d_{1}, d_{2}]$, представляя $\widetilde{\Delta}_n$ по аналогии с формулой \eqref{expression_for_x_2}:
$$
    \widetilde{\Delta}_n = \sign(b-ax_n)\left( (d_{2}-d_{1})\sum_{i=1}^{r}q_i2^{-i}+d_{1}\right).
$$
Cогласно предложению \ref{St_limit_measure_cut_normal} будем считать, что
\begin{equation*}\label{eq_trancn_st}
    \widetilde{\Delta}_n\sim \sign(b-ax_n)\mathcal{N}\left(\frac{1}{ac_n},\frac{1}{2a^2\beta^2 },d_1, d_2 \right).
\end{equation*}

Обозначим через $q(\eta, c, a, \beta)$ такие функции, что если $c, a, \beta$ фиксированы и $\eta$ равномерно распределена на промежутке $[0, 1]$, то
$$
    q(\eta, c, a, \beta) \sim \mathcal{N}\left(\frac{1}{ac}, \frac{1}{2a^2\beta^2}, d_1, d_2\right). 
$$
Функция $q$ представляет собой обратную функцию распределения соответствующего закона распределения и в явном виде записывается следующим образом:
\[
    q(u,c,a,\beta)
    =\frac{1}{ac}+
    \frac{1}{a\beta}\erf^{-1}
    \left((1-u)\,\erf\left(d_1 a\beta+\frac{\beta}{c}\right)
    +u\,\erf\left(d_2a\beta-\frac{\beta}{c}\right)\right),
\]
где $\erf(x)=\frac{2}{\sqrt{\pi}}\int_{0}^{x}e^{-t^2}dt$ --- функция ошибок.

Выбор алгоритма определяется выбором чисел $d_1,d_2$. Мы рассмотрим следующие алгоритмы:

\begin{description}
    \item[\normalfont{Алгоритм 1:}] $d_1 = -2$, $d_2 = 2$. Не учитывает знак поправки, а лишь наибольшее значение модуля $|\widetilde{\Delta}_n|$.
    \item[\normalfont{Алгоритм 2:}] $d_1 = 0$, $d_2 = 2$. Учитывает знак поправки и наибольшее значение модуля $|\widetilde{\Delta}_n|$.
    \item[\normalfont{Алгоритм 3:}] $d_1 = 1/2$, $d_2 = 2$. Учитывает знак поправки и наибольшее и наименьшее значение модуля $|\widetilde{\Delta}_n|$.
    \item[\normalfont{Алгоритм 4:}] $d_1 = 1/2$, $d_2 = 1$. Консервативный алгоритм, при котором гарантированно выполняется неравенство $r(u, a, \beta) \leq 1$, где $r$ определено в \eqref{r_th}. По теореме \ref{Th_general_model} такой алгоритм сходится для любых $a,\beta$. Отметим, что при этом точное значение $\frac{1}{ac_n}$ поправки $\widetilde{\Delta}_n$ не всегда лежит в интервале $[d_1, d_2]$.
\end{description}

На рис. \ref{fig:max_E_st} приведены сравнительные графики для соответствующих функций $E_{max}(\beta) = \max\limits_{a\in[1/2,1]} E(a,\beta)$, дающие пессимистичную оценку на скорость сходимости алгоритма. Если $E_{max}(\beta)<0$, то алгоритм сходится при любых $a \in [0.5, 1)$,  $b \in \mathbb{R}$, чем меньше значение $E_{max}(\beta)$, тем сходимость быстрее.

\begin{figure}[!h]
    \centering
    \includegraphics[width=0.69\textwidth]{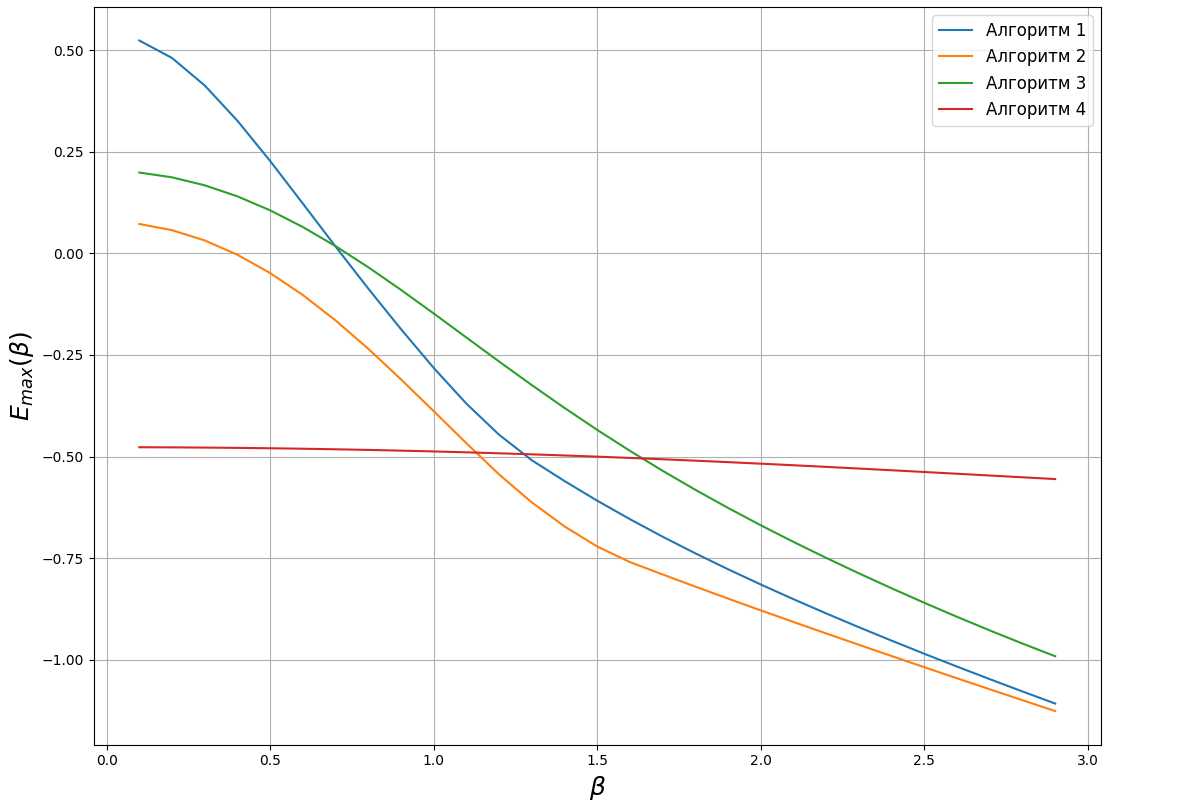}
    \caption{Графики функций $E_{max}(\beta)$ для различных алгоритмов. }
    \label{fig:max_E_st}
\end{figure}

Поскольку Алгоритм 4 при любом выборе $\widetilde{\Delta}_n$ уменьшает значение невязки, то он сходится в любом случае, что отражено в отрицательности функции $E_{max}(\beta)$ для всех значений $\beta$. При этом, поскольку включение $\frac{1}{ac} \in [0.5, 1]$ не всегда выполнено, алгоритм показывает не быструю скорость сходимости даже при больших значениях $\beta$. Алгоритмы 1-3 ведут себя приблизительно одинаково при больших значениях $\beta$, это объясняется высокой вероятностью получить значение $\widetilde{\Delta}_n$, близкое к точному решению уравнения \eqref{eq_delta_n_tilde}. Лучшие показатели сходимости наблюдаются у Алгоритма 2, учитывающего знак, но разрешающего малые значения поправки на каждом шаге. Его преимущество над Алгоритмом 3 вероятно объясняется уменьшением веса хвоста усеченного нормального распределения при котором $r(u, a, \beta) > 1$.

\subsection{Модели вычислений, основанные на распределении Больцмана}
\label{subsection_improving_solution_Boltzmann_distr}

В разделах \ref{subsection_improving_solution_normal_distr}, \ref{subsection_improving_solution_cut_normal_distr} мы рассматривали непрерывные распределения, приближающие распределение Больцмана. В этом разделе мы непосредственно рассмотрим модель вычислений, основанную на распределении Больцмана, и будем считать, что количество кубитов в QA конечно.

Мы будем рассматривать несколько различных алгоритмов поиска поправки $\widetilde{\Delta}_n$. В одной группе алгоритмов мы не будем учитывать правильный знак поправки и будем искать решение уравнения \eqref{eq_delta_n_tilde}, представляя $\widetilde{\Delta}_n$ как в формуле \eqref{expression_for_x}:
$$
    \widetilde{\Delta}_n = \vartheta q_{p} +\sum\limits_{i=r}^{p-1} 2^i q_i,
$$
где $r,p\in\mathbb{Z}$, $r<p$, $\vartheta=-2^p+2^r$.

В другой группе алгоритмов мы будем учитывать знак поправки и будем искать $\widetilde{\Delta}_n$, представляя его как
$$
    \widetilde{\Delta}_n = \mathrm{sign}(b-ax_n)\sum\limits_{i=r}^{p-1} 2^i q_i.
$$
Заметим, что количество $n_q$ участвующих в представлении $\widetilde{\Delta}_n$ кубитов в группе алгоритмов, не учитывающих знак, равно $p-r+1$, а в группе, учитывающих знак, равно $p-r$. Обозначим 
\[
    \Omega_{r,p}^{\pm}=\left\{\pm\sum_{i=r}^{p-1} q_i 2^i~:~q_i\in\{0,1\}\right\},~~
    \Omega_{r,p}^+=\Omega_{r,p}\cap[0,\infty).
\]

Выбор алгоритма определяется выбором $\Omega_{r,p}^{\pm}$ или $\Omega_{r,p}^+$ в качестве множества поиска поправки. Так как выполнено \eqref{eq_delta_12_st}, то достаточно брать $p\leqslant1$. 

Распределение поправки на $n$-ом шаге задается соотношением
\[
    \widetilde{\Delta}_n\sim\mathrm{sign}(b-ax_n)\mathrm{B}\left(\beta,\Omega_{r,p},\left(ax-\frac{1}{c_n}\right)^2\right),
\]
где $\Omega_{r,p}$ равно либо $\Omega_{r,p}^{\pm}$, либо $\Omega_{r,p}^+$.

Обозначим через $q_{r,p}(\eta, c, a, \beta)$, такие функции, что если $c, a, \beta$ фиксированы и $\eta$ равномерно распределена на промежутке $[0, 1]$, то
$$
    q_{r,p}(\eta, c, a, \beta) \sim \mathrm{B}\left(\beta,\Omega_{r,p},\left(ax-\frac1c\right)^2\right), 
$$
где $\Omega_{r,p}$ определяется в соответствии с выбранным алгоритмом. Функции $q_{r,p}$ представляют собой обратные функции распределения соответствующих законов распределения и определяются как
\[
    q_{r,p}(u, c, a, \beta)=\inf\left\{t\,|\,F_{r,p}(t)\geqslant u\right\},~u\in(0,1],
\]
где $F_{r,p}(t)$ --- функция распределения закона $\mathrm{B}\left(\beta,\Omega_{r,p},\left(ax-\frac1c\right)^2\right)$. Отметим, что функция $q_{r,p}(\cdot, c, a, \beta)$ кусочно-постоянная. Дополнительно определим 
$$
    q_{r,p}(0, c, a, \beta)=\lim\limits_{u\to0+}q_{r,p}(u, c, a, \beta).
$$

На рис. \ref{fig:max_E_Boltz} приведены сравнительные графики для соответствующих функций $E_{max}(\beta) = \max\limits_{a\in[1/2,1]} E(a,\beta)$, дающие пессимистичную оценку на скорость сходимости алгоритма. Если $E_{max}(\beta)<0$, то алгоритм сходится при любых $a \in [0.5, 1)$,  $b \in \mathbb{R}$, чем меньше значение $E_{max}(\beta)$, тем сходимость быстрее.

\begin{figure}[!h]
    \centering
    \includegraphics[width=0.80\textwidth]{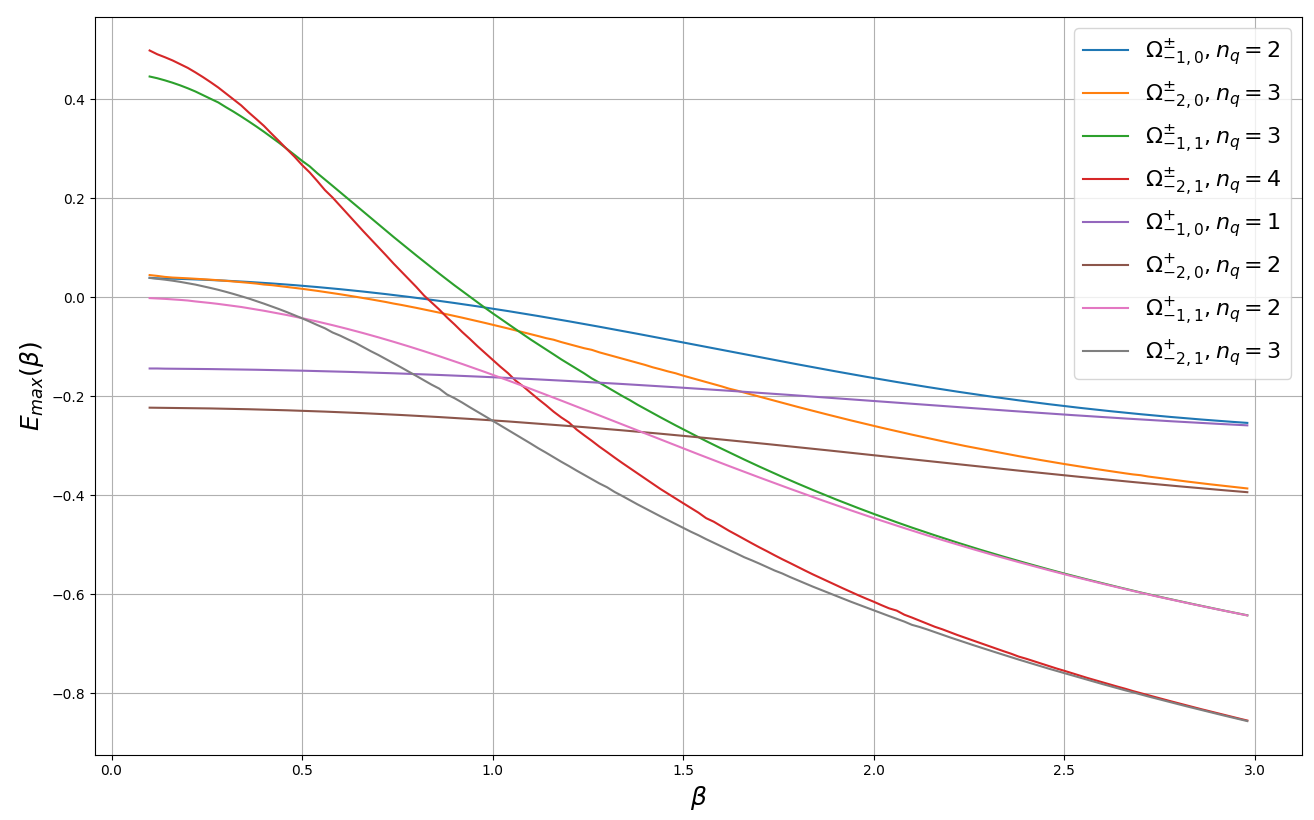}
    \caption{Графики функций $E_{max}(\beta)$ для различных $\Omega_{r,p}$. Число $n_q$ --- количество кубитов, кодирующих поправку.}
    \label{fig:max_E_Boltz}
\end{figure}

Из графиков видно, что при достаточно больших $\beta$ сходятся все методы, включая основанные на одном кубите. Методы, учитывающие знак поправки, в которых $\Omega_{r,p}=\Omega_{r, p}^+$, сходятся быстрее, чем не учитывающие знак поправки, в которых $\Omega_{r,p}=\Omega_{r, p}^{\pm}$. При этом методы, учитывающие знак поправки используют меньшее количество кубитов. Ожидаемо, с увеличением количества используемых кубитов скорость сходимости возрастает, но остается ниже чем предельная скорость, соответствующая усеченному нормальному распределению на рис. \ref{fig:max_E_st}. По аналогии со сравнением Алгоритмов 2 и 3 из раздела \ref{subsection_improving_solution_cut_normal_distr} отметим, что методы с $p = 1$ включают больше значений поправок, при которых точность может ухудшиться, но при этом гарантированно содержат наилучшую возможную поправку, в то время как методы с $p = 0$ гарантированно не ухудшают точность приближения на каждом шаге, но при этом имеют меньшую вероятность для оптимальной поправки. Как и в случае бесконечного количества кубитов, при больших $\beta$ методы с $p= 1$ оказываются более эффективными, чем при $p=0$.


\section{Выводы}
\indent

В статье рассмотрены несколько адаптивных итеративных методов для поиска корня линейного уравнения $ax = b$ при помощи устройства, работающего по принципу квантового отжига. Результат работы QA моделируется распределением Больцмана. Для широкого класса алгоритмов предложен метод доказательства их сходимости и оценки скорости сходимости. Рассмотрены алгоритмы с бесконечным количеством кубитов и с малым количеством кубитов. Показано, что при достаточно малом шуме скорость сходимости экспоненциальная. При этом алгоритмы, учитывающие знак поправки сходятся быстрее, чем не учитывающие знак.


\begin{thebibliography}{99}

\bibitem{Manin}
Ju. I. Manin, 
{\it Computable and noncomputable}.
(Russian) Kibernetika. Sovet. Radio, Moscow, 1980.

\bibitem{Feynman}
R. P. Feynman, 
{\it Simulating physics with computers}.
International Journal of Theoretical Physics. Vol. 21, iss. 6 (1982), pp. 467—488, \url{https://doi.org/10.1007/BF02650179}.

\bibitem{Williams}
C. P. Williams,
{\it Explorations in quantum computing}. 
The Electronic Library of Science, Santa Clara, CA, Springer-Verlag, New York, 1998. \url{https://doi.org/10.1007/978-1-84628-887-6}.

\bibitem{Nielsen} 
M. A. Nielsen, I. L. Chuang,
{\it Quantum Computation and Quantum Information}.
Cambridge University Press, 10th edition, 2010.

\bibitem{Grover}
L. K. Grover,
{\it A fast quantum mechanical algorithm for database search}. 
Proceedings of the twenty-eighth annual ACM symposium on Theory of computing - STOC 1996. Philadelphia, Pennsylvania, USA: Association for Computing Machinery. pp. 212–219. \url{https://doi.org/10.1145/237814.237866}. 

\bibitem{ShorQuantumSupremacy} 
P. W. Shor, 
{\it Polynomial-Time Algorithms for Prime Factorization and Discrete Logarithms on a Quantum Computer}.
SIAM Journal on Computing, vol. 26, no. 5, Oct. 1997, pp. 1484–509. \url{https://doi.org/10.1137/s0097539795293172}.

\bibitem{HHL}
A. W. Harrow, A. Hassidim, S. Lloyd,
{\it Quantum algorithm for linear systems
of equations}.
Physical review letters 103.15 (2009): 150502.

\bibitem{AdiabaticQuantumComputation} 
T. Albash, D. A. Lidar, 
{\it Adiabatic quantum computation},
Rev. Mod. Phys. 90 (2018) 015002. \url{https://link.aps.org/doi/10.1103/RevModPhys.90.015002}.

\bibitem{TravellingSalesman}
T.D. Kieu, 
\textit{The travelling salesman problem and adiabatic quantum computation: an algorithm.} 
Quantum Inf Process 18, 90 (2019). \url{https://doi.org/10.1007/s11128-019-2206-9}.

\bibitem{QuantumComputationbyAdiabaticEvolution} 
E. Farhi, J. Goldstone, S. Gutmann, M. Sipser. 
{\it Quantum Computation by Adiabatic Evolution}. 
arXiv preprint quant-ph/0001106, 2000.

\bibitem{AdiabaticIsEquiv} 
D. Aharonov, W. van Dam, J. Kempe, Z. Landau, S. Lloyd, O. Regev, 
{\it Adiabatic quantum computation is equivalent to standard quantum computation}.
SIAM Journal of Computing 37 (2007) 166. 

\bibitem{quantum_annealing_paper} 
T. Kadowaki, H. Nishimori, 
{\it Quantum annealing in the transverse Ising model}. 
Physical Review E 58(5), 5355--5363 (1998).

\bibitem{ising} Z. Bian, F. Chudak, W. G. Macready, G. Rose, 
{\it The Ising model: teaching an old problem new tricks}. 
D-Wave Systems, Vol. 2, 2010.



\bibitem{AlbashTemperatureScaling} 
T. Albash, V. Martin-Mayor, I. Hen, 
{\it Temperature scaling law for quantum annealing optimizers}.
Physical review letters, vol. 119, no. 11, p. 110502, 2017.

\bibitem{dwavedocs} D-Wave Systems, 
{\it QPU Solver Datasheet}. 
\url{https://docs.dwavesys.com/docs/latest/doc_qpu.html}. Oct. 2023.


\bibitem{Boltz1}
W. Vinci, L. Buffoni, H. Sadeghi, A. Khoshaman, E. Andriyash, M. H. Amin,
{\it A path towards quantum advantage in training deep generative models with quantum annealers}.
Mach. Learn.: Sci. Technol. 1(4), 045028 (2020), \url{https://doi.org/10.1088/2632-2153/aba220}.

\bibitem{Boltz2}
D. Korenkevych, Y. Xue, Z. Bian, F. Chudak, W. Macready, J. Rolfe, E. Andriyash, 
{\it Benchmarking quantum hardware for training of fully visible boltzmann machines}.
arXiv preprint arXiv:1611.04528 (2016).

\bibitem{Boltz3}
M. Denil, N. de Freitas,
{\it Toward the implementation of a quantum RBM}. 
In NIPS 2011 Deep Learning and Unsupervised Feature Learning Workshop, 2011.

\bibitem{QuestionQuantum1}
T. Albash, D. A. Lidar, 
{\it Demonstration of a scaling advantage for a quantum annealer over simulated annealing}.
Phys. Rev. X 8 (2018) 031016.



\bibitem{Nature5000}
King, A.D., Raymond, J., Lanting, T. et al. 
{\it Quantum critical dynamics in a 5,000-qubit programmable spin glass}.
Nature 617, 61–66 (2023). \url{https://doi.org/10.1038/s41586-023-05867-2}.

\bibitem{OMalleyVesselinovLLS} 
D. O'Malley, V.V. Vesselinov, 
{\it ToQ.jl: A high-level programming language for D-Wave machines based on Julia}. 
IEEE Conference on High Performance Extreme Computing (HPEC), 2016.

\bibitem{BorleLomonaco} 
A. Borle, S.J. Lomonaco, 
{\it Analyzing the quantum annealing approach for solving linear least squares problems}.
International Workshop on Algorithms and Computation, pp. 289–301. Springer, Berlin, 2019.

\bibitem{RogersFloatingpoint} M. L. Rogers, R. L. Singleton, 
{\it Floating-point calculations on a quantum annealer: Division and matrix inversion}.
Front. Phys. 8:265 (2020).

\bibitem{BorleHowViable} 
A. Borle, S. J. Lomonaco. 
{\it How viable is quantum annealing for solving linear algebra problems?}
arXiv preprint arXiv:2206.10576, 2022.

\bibitem{linearRegression} 
P. Date, T. Potok, 
{\it Adiabatic quantum linear regression}.
Sci Rep 11, 21905, 2021. \url{https://doi.org/10.1038/s41598-021-01445-6}.

\bibitem{seismic} 
A. M. Souza, E. O. Martins, I. Roditi, N. Sá, R.S. Sarthour, I.S. Oliveira,
{\it An application of quantum annealing computing to seismic inversion}.
Frontiers in Physics, 9 (2022), 748285.

\bibitem{MML2022} 
N. K. Meli, F. Mannel, J. Lellmann, 
{\it An Iterative Quantum Approach for Transformation Estimation from Point Sets}.
2022 IEEE/CVF Conference on Computer Vision and Pattern Recognition (CVPR), New Orleans, LA, USA, 2022, pp. 519-527, \url{https://doi.org/10.1109/CVPR52688.2022.00061}.

\bibitem{Elliptic2023}
R. Conley, D. Choi, G. Medwig, E. Mroczko, D. Wan, P. Castillo, K. Yu,
\textit{Quantum optimization algorithm for solving elliptic boundary value problems on D-Wave quantum annealing device}.
Proc. SPIE 12446, Quantum Computing, Communication, and Simulation III, 124460A (8 March 2023); \url{https://doi.org/10.1117/12.2649076}.

\bibitem{qubo} 
M. Lewis, F. Glover,
{\it Quadratic Unconstrained Binary Optimization Problem Preprocessing: Theory and Empirical Analysis}. 
Networks, 70, pp. 79-97. 2017. \url{https://doi.org/10.1002/net.21751}. 


\bibitem{Shiryaev}
А. Н. Ширяев,
{\it Вероятность}.
Наука, М., 1980 , 576 с.

\bibitem{EulerConstant} 
J. C. Lagarias, 
{\it Euler’s constant: Euler’s work and modern developments}. 
Bull. Amer. Math. Soc. 50 (2013), pp. 527-628.

\end{thebibliography}
\end{document}